\theoremstyle{plain}
\newtheorem{lemma}{Lemma}
\newtheorem{prop}{Proposition}
\newtheorem{approximation}{Approximation}
\newcommand{\be}{\begin{equation*}}
\newcommand{\ee}{\end{equation*}}
\newcommand{\bel}{\begin{equation}}
\newcommand{\eel}{\end{equation}}
\newcommand{\bb}[1]{\mathbb{#1}}
\DeclareMathOperator{\Pow}{Power}
\DeclareMathOperator{\argmax}{argmax}
\newcommand{\edit}[1]{#1}
\begin{document}

\frenchspacing

\title[Limits of epidemic prediction using SIR models]{Limits of epidemic prediction using SIR models}

\author*[1]{Omar Melikechi}
\email{omar.melikechi@duke.edu}

\author[2]{Alexander L. Young}

\author[1]{Tao Tang}

\author[1]{Trevor Bowman}

\author[1,3]{David Dunson}

\author[4]{James Johndrow}

\affil*[1]{\orgdiv{Department of Mathematics}, \orgname{Duke University}, \city{Durham}, \state{NC}, \country{USA}}

\affil[2]{\orgdiv{Department of Statistics}, \orgname{Harvard University}, \city{Cambridge}, \state{MA}, \country{USA}}

\affil[3]{\orgdiv{Department of Statistics}, \orgname{Duke University}, \city{Durham}, \state{NC}, \country{USA}}

\affil[4]{\orgdiv{Department of Statistics}, \orgname{University of Pennsylvania}, \city{Philadelphia}, \state{PA}, \country{USA}}

\abstract{The Susceptible-Infectious-Recovered (SIR) equations and their extensions comprise a commonly utilized set of models for understanding and predicting the course of an epidemic. In practice, it is of substantial interest to estimate the model parameters based on noisy observations early in the outbreak, well before the epidemic reaches its peak.  This allows prediction of the subsequent course of the epidemic and design of appropriate interventions. However, accurately inferring SIR model parameters in such scenarios is problematic.  
This article provides novel, theoretical insight on this issue of practical identifiability of the SIR model. Our theory provides new understanding of the inferential limits of routinely used epidemic models and provides a valuable addition to current simulate-and-check methods. We illustrate some practical implications through application to a real-world epidemic data set.}

\keywords{SIR model, epidemic prediction, parameter inference, identifiability, nonlinear dynamics, hypothesis testing}


\maketitle




\section{Introduction}


The Susceptible-Infectious-Recovered (SIR) model, first introduced in the early twentieth century, is a mathematical model describing the spread of a novel pathogen through a population \cite{Kermack1927,Ross1916,Ross1917,Ross1917a}. This model is governed by the ordinary differential equations
\begin{linenomath*}
\bel \label{sir}
\frac{ds}{dt} = -\beta i s,\quad
\frac{di}{dt} = \beta i s - \gamma i,\quad 
\frac{dr}{dt} = \gamma i.
\eel
\end{linenomath*}
According to this model, the population is divided into three groups or ``compartments,'' each of which represents a proportion of the population. The \textit{susceptible compartment}, $s$, consists of the proportion of individuals who have never been infected with the pathogen. The \textit{infected compartment}, $i$, consists of the proportion of individuals who are currently infected. The \textit{removed compartment}, $r$, consists of the proportion of individuals who have either recovered from the pathogen and are immune or have died, and are therefore removed from the population. Since the SIR model assumes all recovered individuals are permanently immune to the pathogen, the value of $r$ can be obtained from $s$ and $i$ via the identity $s+i+r = 1$. 

During the last century the SIR equations have been modified and extended to model a diverse range of epidemics including Ebola, cholera, H1N1, tuberculosis, HIV/AIDS, influenza, malaria, Dengue fever, Zika, and most recently SARS-CoV-2 \cite{Brauer2019, Coburn2009, Eisenberg2013, Khaleque2017, Lee2020, Pasquali2021, Rachah2015,   Yang2020}. In many of these examples, additional terms are added to account for pathogen specific characteristics of transmission. Additional compartments may also be added to model different subpopulations. One such example is the SEIR model, which includes a subpopulation of exposed (E) but non-infectious cases \cite{Sauer2020}. Collectively, the SIR model and its extensions and variations provide epidemiologists with a vast array of interpretable and highly expressive models to understand and predict the behavior of outbreaks. However, incorporating too many features can have subtle but important drawbacks including limited or unreliable inference of model parameters early in an epidemic. The main contribution of this article is  insight on the inferential limits of epidemics (as captured by estimated parameters) which can be obtained from noisy, real-time observations of an outbreak.

Our work is motivated by the application of SIR and related compartmental models to real-time analysis of epidemics of human disease such as the 2014 Ebola outbreak in West Africa and the ongoing SARS-CoV-2 (``coronavirus'') pandemic. In such outbreaks, the initial aim of the public health response is to extinguish the epidemic while the number of infected individuals is still small, or at least to significantly slow the rate of infection to allow time for the pathogen to be better understood and effective therapeutics or vaccines to be developed. The stay-at-home orders instituted by many countries due to SARS-CoV-2 are one recent example which has had profound global economic impacts. As such, mathematical models employed in the real-time analysis of epidemics must provide accurate inferences about properties of the epidemic -- encapsulated by model parameters -- \textit{early in the epidemic}, when only a small fraction of the population has been infected. Hereafter, we refer to estimation of unknown model parameters from observations as the \textit{inverse problem}.

As noted in a review by Hamelin et al, many disease models proposed in the literature follow a similar structure: (1) a model is proposed, (2) a subset of model parameters are inferred from the literature, and (3) the remaining parameters are fit from data using least squares or maximum likelihood estimation \cite{Hamelin2020}.  In order for these parameter estimates to be reliable the parameters must be statistically \textit{identifiable}, ruling out settings in which multiple parameter values are equally consistent with observed data. Such issues were first considered in the context of compartmental models by Bellman and Astr\"{o}m in 1970 \cite{BELLMAN1970}. Specific details relevant to the SIR model may be found in \cite{Hamelin2020}. Here we provide a brief overview of the well-posedness of the inverse problem.  

A model is structurally identifiable when
there is a single value of the parameters consistent with noise-free data observations. A comprehensive review of analytic methods for assessing structural identifiability is given in \cite{Chis2011}; alternatively, software packages such as DAISY can be used \cite{Bellu2007}. There are many examples in the literature 
 \cite{Brunel2008,Chapman2009,Daly2018,Eisenberg2013,Piazzola2020,Tuncer2016,Tuncer2018,Villaverde2018}. In particular, structural identifiability of the SIR parameters, $\beta$ and $\gamma$, is well understood with strong theoretical support. See \cite{Hamelin2020} for specific cases based on different observations of the compartments. Similar considerations arise in literature related to branching process models which are also commonly used for modeling the dynamics of an outbreak. For example, Fok and Chou establish theoretical guarantees on ascertaining the progeny and lifetime distributions for Bellman-Harris processes when one knows the extinction time or population size distributions \cite{Fok2013}. In practical applications, much less is typically known about the dynamics. Laredo et al. \cite{laredo} prove that when certain branching processes are observed only up to their $n$th generation, one can infer that the true model parameter belongs to a specific subset (which depends on $n$) of parameter space, but it is impossible to infer the exact true parameter for any finite $n$.

In practice, data observed during an epidemic tend to be very noisy, so we are far from the idealized noise-free case. \textit{Practical identifiability} is the ability to discern different parameter values based on noisy observations. Despite considerable recent attention \cite{BalsaCanto2009AnII, Balsa-Canto2008, Chis2011, Srinath2010}, far less is known about practical identifiability. Present theoretical methods rely on sensitivity analysis and the computation of the Fisher information matrix, which is analytically intractable in the SIR model and its extensions.  Instead, it is common to see Monte Carlo methods employed, wherein the model is simulated for a set value of the parameters, noise is added to the simulated observations, and a fitting procedure is conducted on the noisy data \cite{Chis2011, Hamelin2020, Lee2020, Tuncer2018}.  The fidelity of parameter estimates relative to the known values is summarized using the average relative estimation error, which is then plotted as a function of the noise intensity. 

Interestingly, the lack of practical identifiability manifests in a remarkably similar manner across multiple, different model formulations even in cases where the parameters are known to be structurally identifiable.  As the magnitude of the noise is increased, Monte Carlo parameter estimates concentrate along a curve stretched throughout parameter space indicating a functional relationship between model parameters \cite{Browning2020, Eisenberg2013, Piazzola2020, Tuncer2016, Tuncer2018}. Importantly, there are often great disparities in parameter values along this curve and hence huge uncertainty in the parameters.  See Figure \ref{fig:SIRExample} for a representative example in the specific case considered herein.

The goal of this article is to provide theoretical tools for understanding practical identifiability in the context of the SIR model.  We propose a formulation based on realistic observations early in an outbreak.  Then, using linearizations similar to those of \cite{Sauer2020}, we construct analytically tractable approximations to the SIR dynamics from which theoretical guarantees of the performance of the inverse problem are developed.  We begin by introducing the model under consideration, reemphasizing ideas discussed previously to provide overt examples of the challenges of practical identifiability.


\section{Statistical model}\label{model}


The data available to infer the parameters of an SIR model are usually noisy, biased measurements of the rate of change in the size of the susceptible compartment, discretized to unit time intervals $\Delta_t = N(s_{t-1} - s_{t})$. For simplicity, we take the time unit to be one day. Here, $N$ represents the total population size in the jurisdiction under study and $s_t$ is the size of the susceptible compartment at time $t$. The quantity $\Delta_t$ is the number of newly infected individuals between day $t-1$ and day $t$. Data on daily confirmed cases, hospitalizations, or deaths are all examples of observable data that depend on the underlying value of $\Delta_t$. Specifically, all are discrete convolutions of $\Delta_t$ of the form $p \sum_{s=0}^t \Delta_s \pi_{t-s}$, where $p$ is the probability that an infected person goes on to be diagnosed, hospitalized, or die, and $\pi_k$ is the conditional probability that a person tests positive, is hospitalized, or dies $k$ days after becoming infected given that the corresponding outcome will eventually occur. It is likely that the parameters $p$ and, to a lesser extent, $\pi$ change over the course of an epidemic. However, changing values of these parameters can only make inference more difficult, and since our main focus is on studying limitations of inference, as a starting point we assume that $p$ and $\pi$ are fixed and known. 

While the inverse problem with known initial conditions but unknown parameters $\theta=(\beta,\gamma)$ is well-posed when even a partial trajectory of $\Delta_t$ is observed, in reality we observe $\Delta_t$ corrupted with noise, and we always have to work with finitely many discrete-time observations. In epidemic modeling, unlike some other inverse problems, we do not even have control of the sampling rate and are generally stuck with at best daily monitoring data. To simplify exposition, we focus on a simple but flexible noise model in which the observed data $Y_t$ are realizations of a random variable satisfying $\bb E[Y_t] = p \Delta_t$ for some known $p \in (0,1)$. In this case, $\pi_0=1$ and $\pi_k = 0 $ for $k>0$. While our results apply to many noise models, to fix ideas we begin with Gaussian noise 
\begin{linenomath*}
\bel \label{eq:NoisySIR}
Y_t = p\Delta_t + \xi_t, \quad \xi_t \sim \mathcal{N}(0,\sigma^2_t).
\eel
\end{linenomath*}
In addition to simplifying exposition, our primary motivation for choosing Gaussian noise is to illustrate that the SIR model can, as we see shortly and explain later, be practically unidentifiable even for simple, idealized models like the one above. A secondary reason is that, despite its simplicity, \eqref{eq:NoisySIR} is not entirely unrealistic. For example, suppose any two people infected on day $t$ have the same chance of eventually testing positive, that the chance any one such person tests positive is independent of whether any other such person does, and that the average number of people who became infected on day $t$ who go on to test positive is roughly $p\Delta_t$. Then in any sufficiently large population the central limit theorem implies $Y_t$, which in this case is the number of people who become infected on day $t$ and go on to get diagnosed, is approximately normally distributed with mean $p\Delta_t$ and some variance $\sigma_t$, i.e. $Y_t$ satisfies \eqref{eq:NoisySIR}.

Initially, suppose that the variances $\sigma^2_t$ in \eqref{eq:NoisySIR} are known. A simple procedure for solving the inverse problem from data $Y_t$ is maximum likelihood. The gradient of the log-likelihood can be obtained by numerically solving an extended ODE system \cite{Gronwall1919} which allows for easy fitting via gradient-based optimization methods. It can be shown that, even when the trajectory $p\Delta_t$ is observed only at discrete time intervals and the peak of infections has not yet occurred, the maximum likelihood estimator (MLE) exists and is unique, and so the model is structurally identifiable \cite{Hamelin2020}. Problems become apparent however when one seeks to study uncertainty in the estimated parameters. Figure \ref{fig:SIRExample} gives a stark indication of the challenges. We simulate data from an SIR model with parameters $\theta = (\beta,\gamma) = (0.21, 0.07)$ and initial conditions $s_0 = 1-1/N, i_0 = 1/N$ for $N = 10^7$. These parameters were selected to roughly approximate the dynamics of the coronavirus epidemic in New York City prior to the lockdown of March 16, 2020. The trajectories $s_t,i_t$ for $0 \le t \le 120$ are shown in the left panel. By $t=80$, about 1 percent of the population has been infected, and the peak size of the infected compartment occurs around $t=120$. The right panel of Figure \ref{fig:SIRExample} is obtained by repeatedly simulating data from \eqref{eq:NoisySIR} using the trajectory in the left panel, with $p=1$ and $\sigma^2_t = 100 N$ chosen for illustrative purposes. Other potentially more realistic values of $p$ and $\sigma_t$ are considered later in the text; see for example Table \ref{tab:NYC_params_by_p} in Section \ref{NYC} and Cases 1 and 2 in Section \ref{testing}. For each replicate simulation, the model is fit by maximum likelihood. The resulting estimates of $\hat \theta$ are shown in Figure \ref{fig:SIRExample}, which plots $\hat \beta$ against $\hat \gamma$. These are samples from the sampling distribution of the maximum likelihood estimator for these parameters. The estimates exhibit very tight concentration along a line of slope $1$. The variation in $\hat R_0 = \hat \beta/\hat \gamma$ observed for these values is large, ranging from $1.88$ to $5.01$. This high degree of uncertainty occurs despite the fact that we have observed data up through the time when over half the population has been infected. 

\begin{figure}
    \centering
    \begin{tabular}{cc}
    \includegraphics[width=0.5\textwidth]{./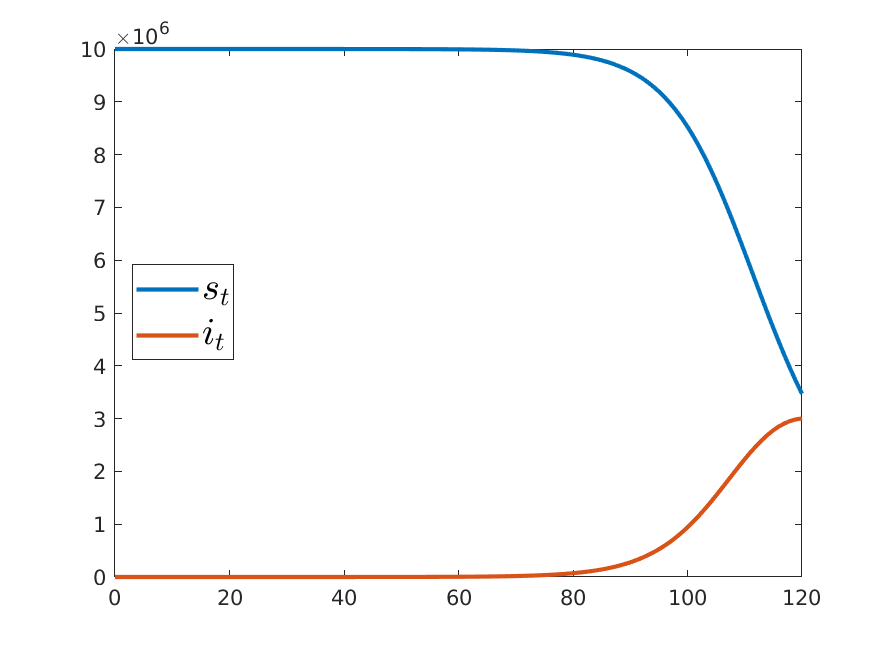} & \includegraphics[width=0.5\textwidth]{./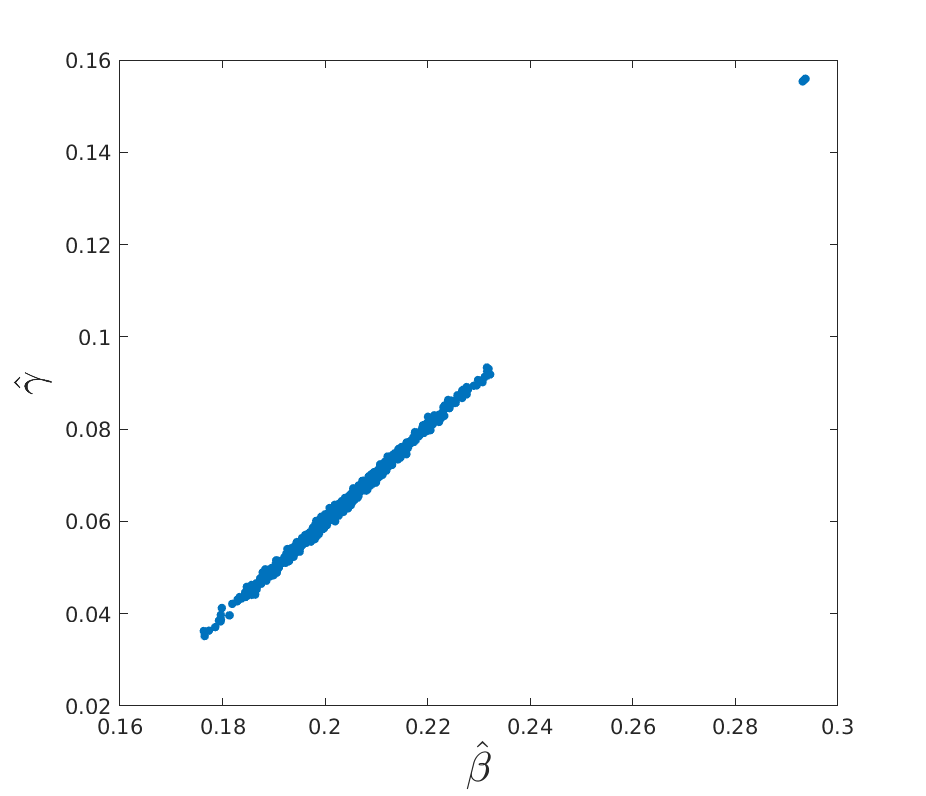} \\
    \end{tabular}
    \caption{The trajectory of the SIR model used in the simulation (left). Plots of $\hat \beta$ vs $\hat \gamma$ from 1000 realizations from the sampling distribution of their MLE (right). }
    \label{fig:SIRExample}
\end{figure}

The linear shape of the plot in Figure \ref{fig:SIRExample} suggests a practical identifiability problem in this model. That is, while the MLE exists and is unique, the curvature of the log-likelihood in the neighborhood of the MLE is very small in the direction where $\hat \beta, \hat \gamma$ lie along a line. We are not the first to notice this phenomenon. Previous works include \cite{Chis2011, Hamelin2020, Lee2020, Tuncer2018}, which experience qualitatively similar issues despite notable differences in the formulation of the likelihood in those settings.

While various empirical studies exist, our main contribution is a theoretical analysis of this phenomenon and the resulting limitations for solving the inverse problem from noisy observations. We take a two-step approach to the analysis. First, we characterize sensitivity of trajectories $\Delta_t$ to perturbations of the parameters $\theta$, and show that perturbations of $\theta$ in the directions $\pi/4$ and $5\pi/4$ (equivalently, along the line of slope $1$ through $\theta$), closely approximate the smallest variation in the trajectory $s_t$ among all perturbations for which $\|\theta_\epsilon-\theta\| = \epsilon$. We then give a computable approximate lower bound on $\inf_{\theta_\epsilon : \|\theta - \theta_\epsilon\| = \epsilon} \lvert s_t(\theta_\epsilon)-s_t(\theta)\rvert$ for times $t$ prior to the peak infection time. Taken together, these results provide an explanation for the phenomenon in Figure \ref{fig:SIRExample}. 

In the second part of the analysis, we relate the problem of uncertainty quantification to hypothesis tests of the form 
\begin{linenomath*}
\be
H_0 : \theta = \theta_0 \quad\text{ vs. }\quad H_1 : \theta = \theta_\epsilon
\ee
\end{linenomath*}
for $\|\theta_\epsilon - \theta_0\| = \epsilon$. We use the result of the first part of our analysis to approximate the type II error of the test, which in turn allows for both theoretical and empirical analysis of the limits of epidemic prediction using SIR models.

\section{Results}



\subsection{Perturbation bound for SIR trajectories}\label{bounds}


Informally, the phenomenon in Figure \ref{fig:SIRExample} is a manifestation of the fact that very different values of $\theta$ can lead to SIR model trajectories that are very close. To formalize this, let $\varphi_t(x_0,\theta)$ be the $(s,i)$-trajectory of the SIR model starting from $x_0=(s_0,i_0)$ with parameters $\theta=(\beta,\gamma)$. To aid the reader, all relevant notation is summarized in Table \ref{tab:notation}. We also remark that the analysis in this subsection and its associated appendices, Appendices \ref{sec:prop1} and \ref{sec:error}, applies directly to the deterministic SIR system \eqref{sir}. In particular, it is independent of our choice of statistical model, which will not become relevant until our discussion of hypothesis testing in Section \ref{testing}.

\begin{table}
    \centering
    \begin{tabular}{c|p{0.6\textwidth}}
    \hline
    \textbf{Notation}  &  \textbf{Description} \\
    \hline
    $x_0 = (s_0,i_0)$ & Shorthand for initial conditions with $s_0+i_0=1$ \\
    \hline
    $\theta = (\beta,\gamma)$   & Shorthand for the parameters of the SIR model \\
    \hline
    $R_0 = \beta/\gamma$ & The reproductive number \\
    \hline
    $\delta = \beta - \gamma$ & An important combination of the model parameters appearing in later analysis \\
    \hline
    $\theta_\epsilon$   & Perturbation of $\theta$ such that $\lVert\theta_\epsilon - \theta\rVert = \epsilon$ \\
    \hline
    $\theta_\epsilon(\omega)$   & Perturbation of $\theta$ in the direction $\omega \in [0,2\pi)$ such that $\lVert\theta_\epsilon(\omega) - \theta\rVert = \epsilon$ \\
    \hline
    $\varphi_t(x_0,\theta) = \left(s_t(x_0,\theta), i_t(x_0,\theta)\right)$   & Solution of the SIR equation with initial condition $x_0$ and parameter $\theta$\\
    \hline
    $Y_{1:T}$ & Observed data on days 1 through $T$ \\
    \hline
    $L(Y_{1:T}\vert \theta)$ & Likelihood of $\theta$ given observed data
    \end{tabular}
    \vspace{\baselineskip}
    \caption{Summary of notation used throughout this article.}
    \label{tab:notation}
\end{table}

For $\epsilon>0$, let $S_\epsilon(\theta)$ denote the circle of radius $\epsilon$ about $\theta$. That is,
\begin{linenomath*}
\be
	S_\epsilon(\theta) = \{\theta_\epsilon(\omega):\omega\in[0,2\pi)\}
\ee
\end{linenomath*}
where $\theta_\epsilon(\omega)=\theta+\epsilon(\cos(\omega),\sin(\omega))$. We set $\delta= \beta-\gamma$ and assume throughout that $\delta>0$; if not, then the \textit{reproductive number} $R_0=\beta/\gamma$ is at most 1 and the epidemic does not grow even at time 0. Similarly, we assume $\epsilon<\delta$. This ensures $R_0$ values of the perturbed parameters $\theta_\epsilon(\omega)=(\beta+\epsilon\cos(\omega),\gamma+\epsilon\sin(\omega))$ are also strictly greater than 1 
\begin{linenomath*}
\be
    \beta+\epsilon\cos(\omega) - \gamma-\epsilon\sin(\omega) = \delta + \epsilon(\cos(\omega)-\sin(\omega))
    	\geq \delta-\epsilon
    	>0
\ee
\end{linenomath*}
and so $R_0(\epsilon,\omega)=(\beta+\epsilon\cos(\omega))/(\gamma+\epsilon\sin(\omega))>1$ for every $\omega$. Finally, for any fixed initial condition $x_0$ and parameter $\theta$ we define the \textit{peak time}, denoted $t_*$, to be the deterministic time at which the number of infected individuals $i_t(x_0,\theta)$ is greatest; that is, $t_*=\argmax\{i_t(x_0,\theta):t\geq 0\}$. Since $di/dt=0$ if and only if $i=0$ or $s=1/R_0$, it is follows that $t_*$ exists and is unique whenever $R_0>1$. With this notation, the main result of this subsection is the following proposition.
\begin{approximation}\label{prop1} 
Let $\lVert\cdot\rVert$ denote the Euclidean norm on $\mathbb{R}^2$ and let $t_*$ be the time of peak infection corresponding to $\theta$. Then for all $t\in [0,0.8t_*)$,
\begin{linenomath*}
\bel\label{eq:Perturbation}
	\frac{\epsilon}{\delta\sqrt{2}}\big(e^{\delta t}-1\big)i_0 \approx \inf_{\omega\in[0,2\pi)} \big\lVert \varphi_t\big(x_0,\theta_\epsilon(\omega)\big) - \varphi_t(x_0,\theta)\big\rVert. 
\eel
\end{linenomath*}
Furthermore the infimum is approximately achieved when $\omega=\pi/4$ or $5\pi/4$.
\end{approximation}
The derivation of \eqref{eq:Perturbation} is in Appendix \ref{sec:prop1}. \edit{Approximation} \ref{prop1} says for any perturbation $\theta_\epsilon(\omega)$ of $\theta$, the distance between the perturbed trajectory $\varphi_t(x_0,\theta_\epsilon(\omega))$ and true trajectory $\varphi_t(x_0,\theta)$ is approximately bounded below by the left side of \eqref{eq:Perturbation} for all times $t$ up to roughly $80\%$ of $t_*$. The ``$\approx$" in \eqref{eq:Perturbation} indicates the bound is subject to error. Specifically, our derivation of \edit{Approximation} \ref{prop1} involves two approximations: First, we approximate the SIR model by a differential equation \eqref{approxODE} whose solution $\widetilde\varphi_t$ is given by \eqref{approxsol}. Second, we use first-order Taylor expansions to approximate perturbations of $\widetilde\varphi_t$ resulting from perturbations in parameter space. Despite these approximations, numerical analysis of the error given in Appendix \ref{sec:error} indicates \eqref{eq:Perturbation} holds for a wide range of parameter values and population sizes; see Figure \ref{fig2} below and Figure \ref{fig:logerror} in Appendix \ref{sec:error}. This numerical analysis also motivates our choice of $80\%$ of the peak time as a cutoff, though this cutoff can be extended to $85\%$ or even $90\%$ for larger populations and certain parameter values\edit{; see Table \ref{tab:percentpeak}.} \edit{To complement the numerical results of Appendix \ref{sec:error}, we give a theoretical upper bound on the error in Appendix \ref{sec:theory}. The theoretical result is more mathematically rigorous than the numerical one; however, it is significantly less precise than the control on error obtained in Appendix \ref{sec:error}. We therefore use results from the numerical analysis, e.g. the 80\% threshold, of Appendix \ref{sec:error} rather than the theoretical analysis of Appendix \ref{sec:theory} for the remainder of this paper.}

\begin{subfigures}\label{fig2}
\begin{figure}[!ht]
\centering
\includegraphics[width=1\textwidth, height=.5\textheight]{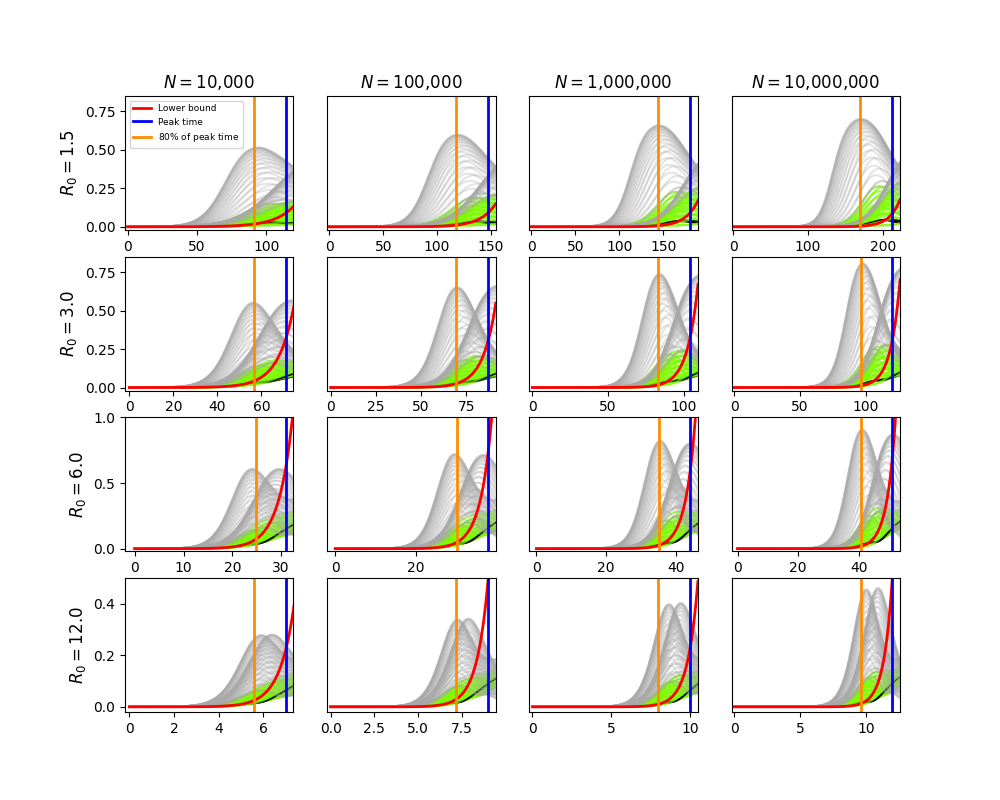}
\caption{\label{first}\textit{Distance between perturbed and true trajectories for different parameter values and population sizes}. In each graph the horizontal axis is the number of days since the start of the epidemic and the vertical axis is the distance $\lVert \varphi_t(x,\theta_\epsilon(\omega))-\varphi_t(x,\theta)\rVert$ between a perturbed trajectory and the true trajectory at time $t$. The gray, green, and black curves correspond to 90 perturbed trajectories, one for  each of 90 equally spaced angles $\omega$ in $[0,2\pi)$. The black curves correspond to the angles $\pi/4$ and $5\pi/4$. The green curves correspond to the remaining angles in the intervals $[\pi/4-\pi/12,\pi/4+\pi/12)$ and $[5\pi/4-\pi/12,5\pi/4+\pi/12)$, i.e. in intervals of width $\pi/6$ centered at $\pi/4$ and $5\pi/4$, respectively. The gray curves correspond to those angles in $[0,2\pi)$ outside these two intervals. Note the distances corresponding to angles close to $\pi/4$ and $5\pi/4$ (the green and black curves) are smaller than those distances corresponding to angles farther away from $\pi/4$ and $5\pi/4$ (the gray curves), which supports the claim that the inverse problem is least practically identifiable for parameter perturbations approximately along a line of slope 1. The approximate lower bound of \edit{Approximation} \ref{prop1} is in red. The peak time of the trajectory corresponding to $\theta$ is indicated by the vertical blue line, and 80\% of it by the vertical orange line. The first through fourth columns have population sizes $10^4, 10^5, 10^6$, and $10^7$, respectively, with only one initial infection in each case. The perturbation sizes for the first through fourth rows are $\epsilon=.03, .03, .06$, and $.1$, respectively. The SIR paramaters for the first through fourth rows are $(\beta,\gamma)=(.21,.14), (.21,.07), (.42,.07)$, and $(1.68,.14)$, which give respective $R_0$ values of 1.5, 3, 6, and 12. Note the approximate lower bound holds roughly up to 80\% of the peak time in all cases despite the wide range of parameters. Finally, we remark that the two seemingly ``distinct" classes of gray curves in each plot correspond to different subsets of the 90 distinct angles. This as well as the multimodality of certain curves (which becomes more apparent when our graphs are extended further beyond the peak time) are consequences of the nonlinearity of the SIR model and are not directly relevant to our analysis.} \label{fig2a}
\end{figure}
\begin{figure}[!ht]
\includegraphics[width=1\textwidth, height=.5\textheight]{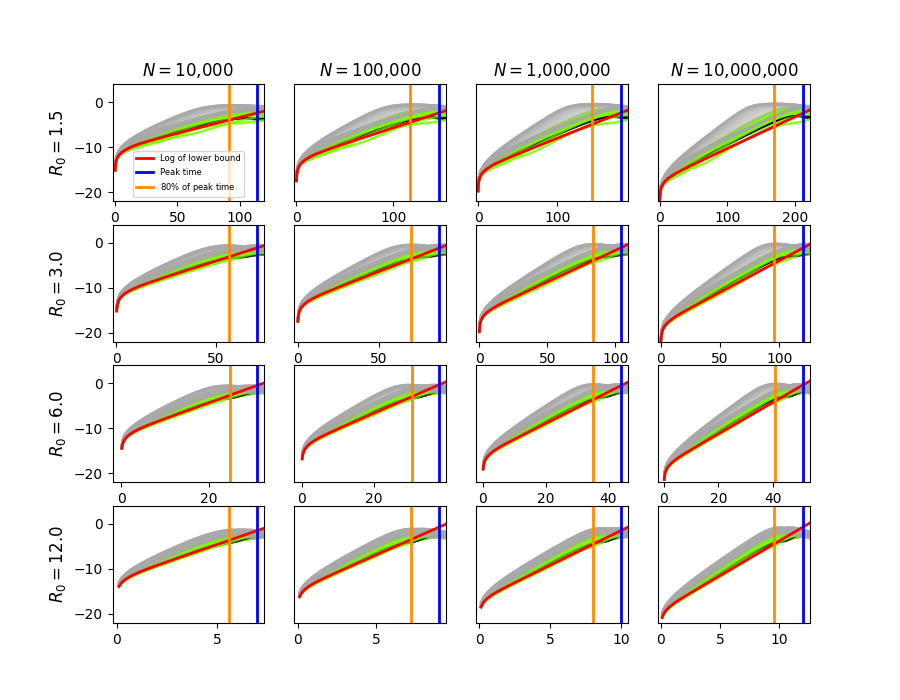}
\caption{\label{second}\textit{Logarithm of distance between perturbed and true trajectories for different parameter values and population sizes}. Everything is the same as in Figure \ref{fig2a} except now we plot $\log\lVert \varphi_t(x,\theta_\epsilon(\omega))-\varphi_t(x,\theta)\rVert$ instead of $\lVert \varphi_t(x,\theta_\epsilon(\omega))-\varphi_t(x,\theta)\rVert$. This gives a better view of the approximate lower bound early in the epidemic. Note the vertical axis is now a log scale.}
\end{figure}
\end{subfigures}

\edit{Approximation} \ref{prop1} successfully predicts the directions in parameter space, namely $\omega=\pi/4$ and $5\pi/4$ (equivalently, along the line of slope 1 through $\theta$), corresponding to the most uncertainty about parameters even when data are observed up to the peak time, as in Figure \ref{fig:SIRExample}. In other words, the inverse problem of determining $\theta$ from data is least practically identifiable when distinguishing between $\theta$ and parameter values lying approximately on the line of slope 1 through $\theta$. Furthermore, the approximate lower bound \eqref{eq:Perturbation} quantifies the extent to which the inverse problem will not be practically identifiable which, as we discuss in the next subsection, is necessary for meaningful hypothesis testing. Finally, we find that the lower bound in \eqref{eq:Perturbation} approximately holds for the $s$ trajectory alone. That is, if $s_t(x_0,\theta_\epsilon(\omega))$ and $s_t(x_0,\theta)$ are the $s$ trajectories corresponding to $\theta_\epsilon(\omega)$ and $\theta$, respectively, then
\begin{linenomath*}
\bel\label{eq:sbound}
   \frac{\epsilon}{\delta\sqrt{2}}\big(e^{\delta t}-1\big)i_0 \approx \inf_{\omega\in[0,2\pi)}\big\lvert s_t\big(x_0,\theta_\epsilon(\omega)\big) - s_t(x_0,\theta)\big\rvert,
\eel
\end{linenomath*}
and the infimum is again achieved when $\omega=\pi/4$ and $5\pi/4$. The intuition behind \eqref{eq:sbound} is that the $s$ compartment is substantially larger than the $i$ compartment early in an epidemic and therefore contributes significantly more to $\lVert\varphi^\epsilon_t-\varphi_t\rVert$ than $i$. This observation will be used for the hypothesis testing in Section \ref{testing} since our statistical model depends crucially on $\Delta_t=N(s_{t-1}-s_{t})$, which in turn depends only on $s$ rather than on $s$ and $i$ together. The approximation error implicit in the $\approx$ symbol in \eqref{eq:Perturbation} and \eqref{eq:sbound} is the one quantity we do not have rigorous control over; see Appendix \ref{sec:error} for details.


\subsection{Hypothesis testing for the inverse problem}\label{testing}


In this subsection we revisit the inverse problem in light of the perturbation bounds \eqref{eq:Perturbation} and \eqref{eq:sbound}. For context and to motivate the main result of this subsection, namely \edit{Approximation} \ref{prop2} and its subsequent discussion, we first give a brief overview of simple hypothesis testing and the Neyman-Pearson Lemma.

Suppose we observe data $Y$ taking values in a space $\mathcal{Y}$ and that these data are drawn from an unknown probability distribution belonging to a parametrized family of probability distributions $\{\mathbb{P}_\theta\}$. Given two parameters $\theta_0$ and $\theta_1$, a natural question is whether the observed data came from $\mathbb{P}_{\theta_0}$ or $\mathbb{P}_{\theta_1}$. This is a \textit{simple hypothesis test}, denoted by
\begin{linenomath*}
\bel\label{eq:hypothesis}
H_0 : \theta = \theta_0 \quad\text{ vs. }\quad H_1 : \theta = \theta_1,
\eel
\end{linenomath*}
where $H_0$ and $H_1$ are the \textit{null} and \textit{alternative hypotheses}, respectively. \textit{Simple} here refers to the fact that both $H_0$ and $H_1$ correspond to single $\theta$ values which completely determine the distributions $\mathbb{P}_{\theta_0}$ and $\mathbb{P}_{\theta_1}$. The aim is to decide whether to reject $H_0$ in favor of $H_1$, which is done by choosing a subset $\mathcal{R}$ of $\mathcal{Y}$ called the \textit{rejection region}. This choice of $\mathcal{R}$ completely determines the test: If $Y\in\mathcal{R}$, then reject $H_0$ in favor of $H_1$; if $Y\notin\mathcal{R}$, then do not reject $H_0$. \textit{Type I error} occurs when $H_0$ is true but is rejected, and \textit{type II error} occurs when $H_0$ is false but not rejected; see Table \ref{tab:hypothesis}. This is quantified\footnote{Type I and II error rates are commonly denoted by $\alpha$ and $\beta$, but since $\beta$ is already used as an SIR parameter we adopt the unconventional notation $\mathcal{E}_1$ and $\mathcal{E}_2$.} as
\begin{linenomath*}
\be
\begin{aligned}
	&\mathcal{E}_1(\mathcal{R}) = \text{\textit{Type I error rate}}
		= \mathbb{P}_{\theta_0}(Y\in\mathcal{R})
		= \mathbb{P}_{\theta_0}(\text{Reject}\ H_0), \\
		&\mathcal{E}_2(\mathcal{R}) = \text{\textit{Type II error rate}} 
		= \mathbb{P}_{\theta_1}(Y\notin\mathcal{R})
		= \mathbb{P}_{\theta_1}(\text{Do not reject}\ H_0). 
\end{aligned}
\ee
\end{linenomath*}
Ideally one would find a rejection region $\mathcal{R}$ that simultaneously minimizes type I and type II error rates, but this is generally impossible. Instead, a common statistical paradigm is to fix a \textit{significance level} $\alpha>0$ and minimize $\mathcal{E}_2(\mathcal{R})$ subject to the constraint $\mathcal{E}_1(\mathcal{R})=\alpha$. For such an $\alpha$, a region $\mathcal{R}$ is called \textit{a most powerful level-$\alpha$ rejection region} if $\mathcal{E}_1(\mathcal{R})=\alpha$ and $\mathcal{E}_2(\mathcal{R})\leq \mathcal{E}_2(\mathcal{R}')$ for all $\mathcal{R}'$ satisfying $\mathcal{E}_2(\mathcal{R}')=\alpha$. That is, $\mathcal{R}$ minimizes type II error over all rejection regions with type I error equal to $\alpha$. The Neyman-Pearson Lemma gives the most powerful rejection region in the case of a simple hypothesis test.
\begin{lemma}\label{neyman-pearson}
(Neyman-Pearson) Let $L(Y\vert \theta)$ denote the likelihood function for data $Y$ and a parameter $\theta$, and fix $\alpha>0$. Then there exists an $\eta\in\mathbb{R}$ such that
\begin{linenomath*}
\bel\label{eq:LRT}
	\mathcal{R}_{LR} = \bigg\{Y : \frac{L(Y\vert\theta_1)}{L(Y\vert\theta_0)} \geq \eta\bigg\}
\eel
\end{linenomath*}
is a most powerful level-$\alpha$ rejection region for the hypothesis test \eqref{eq:hypothesis}.
\end{lemma}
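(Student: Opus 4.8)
The plan is to prove the two requirements in the paper's definition of a most powerful level-$\alpha$ rejection region separately: first that a threshold $\eta$ can be chosen so that $\mathcal{E}_1(\mathcal{R}_{LR}) = \alpha$, and second that the resulting $\mathcal{R}_{LR}$ minimizes type II error among all competing regions of the same size. Writing $\Lambda(Y) = L(Y\vert\theta_1)/L(Y\vert\theta_0)$ for the likelihood ratio, I would first establish existence of $\eta$ by studying the tail function $G(\eta) = \mathbb{P}_{\theta_0}(\Lambda(Y) \geq \eta)$. This is non-increasing and right-continuous, tends to $0$ as $\eta\to\infty$, and tends to $\mathbb{P}_{\theta_0}(\Lambda(Y) > 0)$ as $\eta\downarrow 0$. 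Under the Gaussian model \eqref{eq:NoisySIR} both likelihoods are strictly positive, so $\Lambda > 0$ almost surely, $G(\eta)\to 1$, and the distribution of $\Lambda$ under $\mathbb{P}_{\theta_0}$ is continuous; the intermediate value theorem then furnishes an $\eta$ with $G(\eta)=\alpha$, i.e. $\mathcal{E}_1(\mathcal{R}_{LR}) = \mathbb{P}_{\theta_0}(\mathcal{R}_{LR}) = \alpha$.

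The heart of the argument is the optimality step, which rests on a single pointwise inequality. For any rejection region $\mathcal{R}'\subseteq\mathcal{Y}$ I would verify that
\be
\big(\mathbf{1}_{\mathcal{R}_{LR}}(Y) - \mathbf{1}_{\mathcal{R}'}(Y)\big)\big(L(Y\vert\theta_1) - \eta\, L(Y\vert\theta_0)\big) \geq 0
\ee
holds for every $Y\in\mathcal{Y}$. Indeed, on $\mathcal{R}_{LR}$ the first factor is nonnegative and the second is nonnegative (since $L(Y\vert\theta_1)\geq \eta L(Y\vert\theta_0)$ there by definition of $\mathcal{R}_{LR}$), while off $\mathcal{R}_{LR}$ both factors are nonpositive; either way the product is nonnegative. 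Integrating against the dominating measure and using $\int \mathbf{1}_{\mathcal{R}}\, L(Y\vert\theta_j) = \mathbb{P}_{\theta_j}(\mathcal{R})$ gives
\be
\mathbb{P}_{\theta_1}(\mathcal{R}_{LR}) - \mathbb{P}_{\theta_1}(\mathcal{R}') \;\geq\; \eta\big(\mathbb{P}_{\theta_0}(\mathcal{R}_{LR}) - \mathbb{P}_{\theta_0}(\mathcal{R}')\big).
\ee
Recognizing the power as $\mathbb{P}_{\theta_1}(\mathcal{R}) = 1 - \mathcal{E}_2(\mathcal{R})$ and the size as $\mathbb{P}_{\theta_0}(\mathcal{R}) = \mathcal{E}_1(\mathcal{R})$, and restricting to regions with $\mathcal{E}_1(\mathcal{R}') = \alpha = \mathcal{E}_1(\mathcal{R}_{LR})$, the right-hand side vanishes, whence $\mathcal{E}_2(\mathcal{R}_{LR}) \leq \mathcal{E}_2(\mathcal{R}')$. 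This is exactly the minimality of type II error required, completing the proof.

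I expect the only genuine subtlety to be the existence of $\eta$ attaining $\mathcal{E}_1(\mathcal{R}_{LR}) = \alpha$ exactly. This can fail when $\Lambda$ does not have a continuous distribution under $\mathbb{P}_{\theta_0}$ — for instance under discrete noise models — since then $G$ has jumps and may skip over the value $\alpha$, and the fully general statement requires randomizing on the boundary set $\{\Lambda = \eta\}$. Because the paper's working model is Gaussian and the lemma is stated as the existence of a single deterministic $\eta$, I would carry out the continuous case in the main argument and relegate the randomized boundary modification to a brief remark, noting that the pointwise inequality above extends verbatim to randomized tests by replacing the indicators with test functions taking values in $[0,1]$.
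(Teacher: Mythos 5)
Your proof is correct, but there is nothing in the paper to compare it against: the authors state the Neyman--Pearson Lemma as a classical result and give no proof of it, so any valid argument here is necessarily ``different from the paper's.'' What you have written is the standard textbook proof --- existence of the threshold via the tail function of the likelihood ratio under $\mathbb{P}_{\theta_0}$, then the pointwise inequality $\big(\mathbf{1}_{\mathcal{R}_{LR}}-\mathbf{1}_{\mathcal{R}'}\big)\big(L(Y\vert\theta_1)-\eta L(Y\vert\theta_0)\big)\geq 0$ integrated against the dominating measure --- and both steps are sound. Your remark about the need for randomization when $\Lambda$ has atoms is exactly the right caveat, and it is indeed harmless here: under the paper's Gaussian model \eqref{eq:NoisySIR} the log-likelihood ratio is a nondegenerate linear function of $Y_{1:T}$ (the quadratic terms cancel since the variances agree under both hypotheses), so its distribution is continuous and a deterministic $\eta$ exists. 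The only slip is cosmetic: $G(\eta)=\mathbb{P}_{\theta_0}(\Lambda\geq\eta)$ is left-continuous rather than right-continuous in $\eta$ (its right limit at $\eta$ is $\mathbb{P}_{\theta_0}(\Lambda>\eta)$), but this is immaterial once continuity of the law of $\Lambda$ is in hand. You might also note in passing that the paper's displayed definition of a most powerful level-$\alpha$ region contains a typo ($\mathcal{E}_2(\mathcal{R}')=\alpha$ should read $\mathcal{E}_1(\mathcal{R}')=\alpha$); your argument correctly targets the intended definition.
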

$L(Y\vert\theta_1)/L(Y\vert\theta_0)$ is called the \textit{likelihood ratio} and the decision to reject or not reject $H_0$ based on the rejection region $\mathcal{R}_{LR}$ is called the \textit{likelihood ratio test}. Since the Neyman-Pearson Lemma guarantees the likelihood ratio test is most powerful in our setting, we henceforth consider only the rejection region $\mathcal{R}_{LR}$ and set $\mathcal{E}_1=\mathcal{E}_1(\mathcal{R}_{LR})$ and $\mathcal{E}_2=\mathcal{E}_2(\mathcal{R}_{LR})$.

\begin{table}
    \centering
    \begin{tabular}{c|c|p{0.2\textwidth}}
      &  $\boldsymbol{H_0:\theta=\theta_0}$ & $\boldsymbol{H_1:\theta=\theta_\epsilon}$ \\
    \hline
    \textbf{Reject} $\boldsymbol{H_0}$ & Type I error & Success \\
    \hline
    \textbf{Do not reject} $\boldsymbol{H_0}$ & Success & Type II error
    \end{tabular}
    \vspace{\baselineskip}
    \caption{\textit{Simple hypothesis test}.}
    \label{tab:hypothesis}
\end{table}

Returning now to the SIR model, our hypothesis test of interest is
\begin{linenomath*}
\bel\label{eq:hypothesisSIR}
H_0 : \theta = \theta_0 \quad\text{ vs. }\quad H_1 : \theta = \theta_{\epsilon}(\omega)
\eel
\end{linenomath*}
where, as before, $\theta_\epsilon(\omega)$ is a perturbation of $\theta$ of size $\epsilon$ in the direction $\omega$.  The observed data are $Y_{1:T} = (Y_1,\dots, Y_T)$ for any time $T$ before the time of peak infection, with each $Y_t=p\Delta_t+\xi_t$ as in \eqref{eq:NoisySIR}. For the rest of this paper $\mathcal{E}_2(\omega)$ will denote the type II error rate of \eqref{eq:hypothesisSIR} for the likelihood ratio test with angle $\omega$. As such, the likelihood ratio test \eqref{eq:LRT} minimizes $\mathcal{E}_2(\omega)$ thereby providing the most powerful technique for detecting differences of order $\epsilon$ in the SIR model parameters. We also set $\Delta_t^\epsilon(\omega)= \Delta_t(\theta_\epsilon(\omega))$ where, recall, $\Delta_t(\theta)=N(s_{t-1}(\theta)-s_t(\theta))$, and let $\Phi$ denote the standard normal cumulative distribution function. With this notation we now present the main result of this subsection.

\begin{approximation}\label{prop2}
For any $\epsilon>0$, $\omega\in[0,2\pi)$, and significance level $\alpha>0$,
\begin{linenomath*}
\footnotesize
\begin{align}
	\mathcal{E}_2(
	\omega) &\approx  1-\Phi\left(\Phi^{-1}(\alpha)+pNi_0\sqrt{\sum_{t=1}^T\frac{e^{2\delta t}}{\sigma_t^2}\left[\beta_\epsilon\bigg(\frac{e^{-\delta_\epsilon}-1}{-\delta_\epsilon}\bigg)e^{\epsilon tf(\omega)}-\beta\bigg(\frac{e^{-\delta}-1}{-\delta}\bigg)\right]^2}\phantom{-}\right)\label{eq:approx1} \\
	    &\approx 1-\Phi\left(\Phi^{-1}(\alpha)+pNi_0\sqrt{\sum_{t=1}^T\frac{e^{2\delta t}}{\sigma_t^2}\left[(\beta+\epsilon\cos\omega) e^{\epsilon tf(\omega)}-\beta\right]^2}\phantom{-}\right),\label{eq:approx2}
\end{align}
\end{linenomath*}
\normalsize
where $f(\omega)=\cos(\omega)-\sin(\omega)$, $\beta_\epsilon=\beta+\epsilon\cos\omega$, and $\delta_\epsilon=\delta+\epsilon f(\omega)$. Moreover,
\begin{linenomath*}
\be
    \mathcal{E}_2(\pi/4)\approx\mathcal{E}_2(5\pi/4)\approx\sup_{\omega\in[0,2\pi)}\mathcal{E}_2(\omega).
\ee
\end{linenomath*}
\end{approximation}

\begin{figure}[!ht]
\centering
\begin{tabular}{c}
\includegraphics[width=1\textwidth]{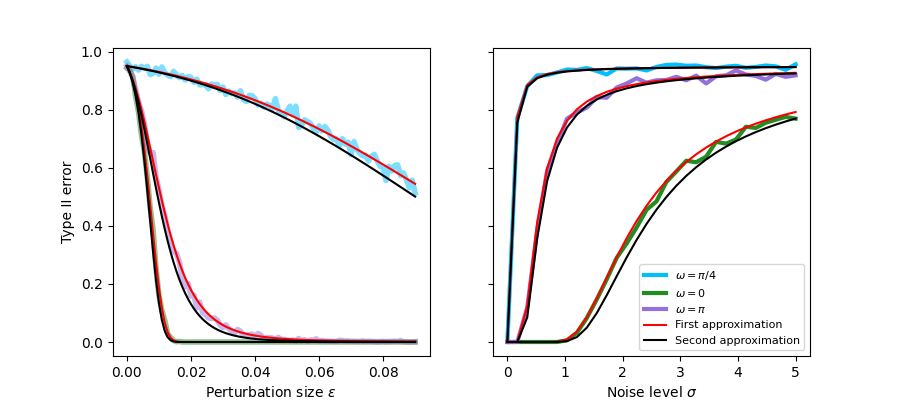}
\end{tabular}
\caption{\textit{Type II error as a function of perturbation size and noise level}. The left panel shows the empirical and theoretical type II errors for the angles $\omega=0,\pi/4$, and $\pi$ as a function of perturbation size $\epsilon$ with fixed noise level $\sigma=0.3$. The right panel shows the empirical and theoretical type II errors for the same angles as a function of noise level $\sigma$ with fixed perturbation size $\epsilon=.03$. In each case the SIR parameters are those from Section \ref{model}, namely $(\beta,\gamma)=(.21,.07)$, $N=10^7$, and initial condition $i_0=1/N$. The time horizon $T$ is $60$ days into the epidemic, which in this case is $60$ days prior to the peak time. The significance level is $\alpha=.05$. Here, \textit{theoretical} refers to the first (red) and second (black) approximations of type II error $\mathcal{E}_2(\omega)$ in \edit{Approximation} \ref{prop2}, i.e. equations \eqref{eq:approx1} and \eqref{eq:approx2}, respectively. \textit{Empirical} refers to the type II error obtained by performing 1000 simulations of the noisy SIR model \eqref{eq:NoisySIR} followed by a likelihood ratio test of the hypothesis in \eqref{eq:hypothesisSIR} for each set of parameters. More specifically, the red and black curves lying over the blue line are the type II error approximations \eqref{eq:approx1} and \eqref{eq:approx2} when $\omega=\pi/4$, those lying over the purple line are when $\omega=\pi$, and those lying over the green line are when $\omega=0$, with the blue, green, and purple curves corresponding to the empirically computed type II error rates when $\omega=\pi/4, 0$, and $\pi$, respectively. In each case both theoretical results closely align with the empirical ones, with the first approximation being slightly better than the second as expected. Also as predicted, the empirical type II errors all approach $1-\alpha=.95$ both as perturbation size goes to $0$ and as the noise level gets large, and this approach is most rapid when $\omega=\pi/4$. In each case the noise model is Case 2, $\sigma_t=N\sigma i_t$. For the simulated blue, green, and purple curves, we used a numerical integrator to obtain the $i_t$ values, while for the red and black curves we used the pre-peak approximation $i_t\approx e^{\delta t}i_0$.}
\label{fig3}
\end{figure}

The derivation of \edit{Approximation} \ref{prop2} is in the Appendix. The first and second approximations of $\mathcal{E}_2(\omega)$ correspond to the red and black curves in Figure \ref{fig3}, respectively. Comparing these to the empirical type II error rates (the blue, green, and purple curves) we see these approximations are sound. In particular, the last part of \edit{Approximation} \ref{prop2} indicates the angles $\pi/4$ and $5\pi/4$ give rise to the largest type II error rate for the hypothesis test \eqref{eq:hypothesisSIR} with perturbation size $\epsilon$ and significance level $\alpha$. To quantify the magnitude of type II error in these cases, we substitute into the second approximation to get
\begin{linenomath*}
\bel\label{eq:pi4}
	\mathcal{E}_2(\tfrac{\pi}{4}) \approx \mathcal{E}_2(\tfrac{5\pi}{4})
		\approx 1 - \Phi\left(\Phi^{-1}(\alpha)+\frac{pNi_0\epsilon}{\sqrt{2}}\sqrt{\sum_{t=1}^T\frac{e^{2\delta t}}{\sigma_t^2}}\phantom{-}\right).
\eel
\end{linenomath*}
Note that as the noise level $\sigma_t^2$ goes to $0$, the sum under the square root goes to infinity and the entire expression goes to $1-\Phi(\infty)=0$. That is, if there is no noise then the type II error rate of the likelihood ratio test will vanish. If there is any noise at all however, the sum is finite and \eqref{eq:pi4} becomes arbitrarily close to $1-\Phi(\Phi^{-1}(\alpha))=1-\alpha$ as either $p$, the probability of detecting an infected individual, or $\epsilon$, the perturbation size, go to $0$. For example, if we set the type I error rate to $\alpha=0.1$ then as either $p$ or $\epsilon$ go to $0$, the probability of making a type II error will approach $0.9$. Similarly, type II error will go to $1-\alpha$ as $\sigma_t^2$ goes to infinity. This limit is unrealistic though since $\sigma_t^2$ is the variance of observed data and as such should be less than the population size. This leads us to consider two cases for noise. 
\vspace*{.2cm}
\begin{enumerate}[label={\textit{Case \arabic*.}}, align=left]
	\item \textit{Noise proportional to population size, i.e. $\sigma_t = N\sigma$ for $\sigma$ in $(0,1)$}.
	\vspace*{.1cm}
	\item \textit{Noise proportional to number of infections, i.e. $\sigma_t= N\sigma i_t$ for $\sigma>0$}.
\end{enumerate}
\vspace*{.2cm}
In both cases $\sigma$ is constant and independent of $t$. Case 2 involves $i_t$ which is not expressible in closed-form. However, we can use \edit{Approximation} \ref{prop1} and its derivation, specifically the approximate solution \eqref{approxsol}, to circumvent this issue by replacing $i_t$ with $e^{\delta t}i_0$. As discussed in Section \ref{bounds}, this approximation is appropriate early in the epidemic. In Case 1, Equation \eqref{eq:pi4} becomes
\begin{linenomath*}
\be
	\mathcal{E}_2 \approx\ 1 - \Phi\left(\Phi^{-1}(\alpha)+\frac{pi_0\epsilon}{\sigma\sqrt{2}}\sqrt{\sum_{t=1}^T e^{2\delta t}}\phantom{-}\right).
\ee
\end{linenomath*}
In addition to the aforementioned limits, we see in this case that the expression, and hence the type II error, approaches $1-\alpha$ as the population $N$ goes to infinity (so that $i_0=1/N$ goes to $0$). In Case 2, Equation \eqref{eq:pi4} becomes
\begin{linenomath*}
\bel\label{eq:10}
	\mathcal{E}_2(\tfrac{\pi}{4}) \approx \mathcal{E}_2(\tfrac{5\pi}{4})  
		\approx 1 - \Phi\left(\Phi^{-1}(\alpha)+\frac{p\epsilon\sqrt{T}}{\sigma \sqrt{2}}\right).
\eel
\end{linenomath*}
The above expression does not depend on population size, $N$, nor on the SIR parameters $\beta$ and $\gamma$, while the asymptotic results for $p$, $\epsilon$, and $\sigma$ still apply. Since Case 1 has noise proportional only to $N$, it implicitly assumes relative noise is larger earlier in the outbreak which may not be realistic. Case 2 avoids this since relative noise will be small whenever the reported number of infected individuals is small, e.g. early in an epidemic. For this reason and its invariance under different model parameters and population sizes, we consider only Case 2 moving forward.


\subsection{Simple illustration: Implications of \edit{Approximation} \ref{prop2}} \label{implications}
\label{sec:Emp_Analysis}

\edit{Approximation} \ref{prop2} says that given an SIR parameter $\theta_0$, the probability of failing to reject the hypothesis $H_0:\theta = \theta_0$ when the alternative $H_1:\theta=\theta_\epsilon$ is true can be very high, especially when the angle of perturbation is $\pi/4$ or $5\pi/4$. In this subsection we take a closer look at what this means for epidemic prediction.

Consider for concreteness the familiar setting $\theta_0=(.21, .07)$, $N=10^7$, and $i_0=1/N$. Figure \ref{fig4} shows the total number of infections $10$ days past the peak time as well as the duration\footnote{We define the duration of the epidemic to be first day after the peak time such that less than $10$ individuals are infected.} of the epidemic for $\theta_0$, $\theta_\epsilon(\pi/4)$, and $\theta_\epsilon(5\pi/4)$ and varying perturbation sizes $\epsilon$. 

\begin{figure}[!ht]
\centering
\begin{tabular}{c}
\includegraphics[width=1\textwidth]{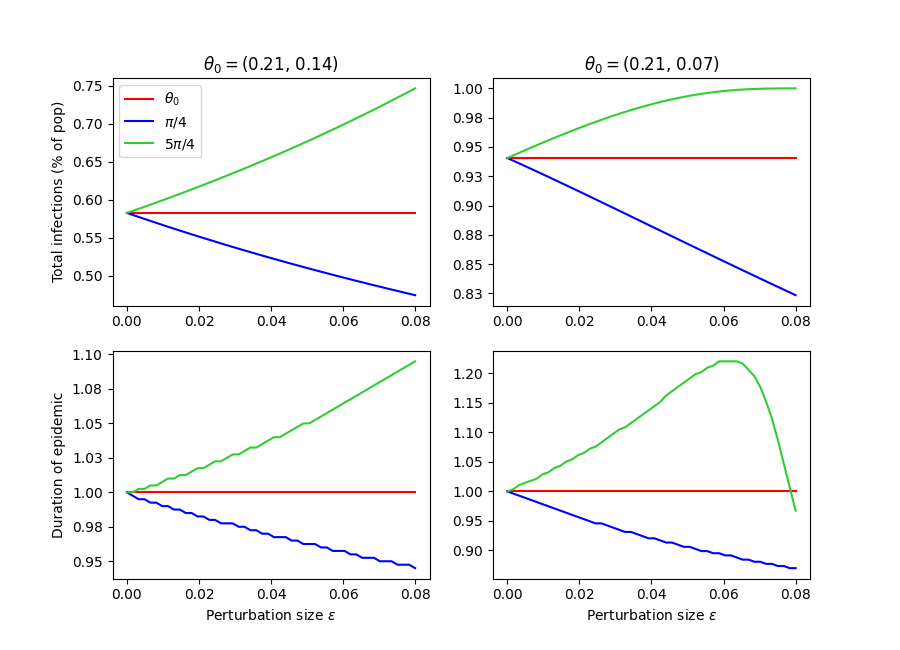}
\end{tabular}
\caption{\textit{Consequences of type II error}. The top panels show the total number of infections $10$ days past the time of peak infection as a percentage of the total population. The bottom panels show the duration of the epidemic, which is defined to be the first day past the peak when less than $10$ individuals are infectious. The left panels correspond to the parameter $\theta_0=(.21,.14)$ and the right panels to $\theta_0=(.21,.07)$. The red lines give the total percent infected or duration of the epidemic for the true parameter $\theta_0$ in each of their respective plots, while the blue and green curves give these values for $\theta_\epsilon(\pi/4)$ and $\theta_\epsilon(5\pi/4)$ over a range of $\epsilon$ values, respectively. In all cases $N=10^7$.}
\label{fig4}
\end{figure}

Setting $\omega=\pi/4$ or $5\pi/4$ and letting noise be as in Case 2, the first approximation in \edit{Approximation} \ref{prop2} can be rearranged to obtain
\begin{linenomath*}
\be
	\epsilon  \approx \frac{\big[\Phi^{-1}(1-\mathcal{E}_2)-\Phi^{-1}(\alpha)\big]\sigma\delta e^\delta\sqrt{2}}{(e^\delta-1)p\sqrt{T}}.
\ee
\end{linenomath*}
From this we can compute the consequences of type II error. For example, suppose we are $60$ days into an epidemic $(T=60)$ and wish to test the hypothesis $\theta_0=(.21,.07)$ versus $\theta_\epsilon(5\pi/4)$ as above. Moreover, suppose $p=1$ (perfect diagnostics), $\sigma = 0.2$ (infection standard deviation of $\pm 20\%$ of new cases), and $\alpha=.05$ and $\mathcal{E}_2=0.5$. Then the above equation gives $\epsilon \approx .064$. Thus, reading off the right panels in Figure \ref{fig4}, we see that under these fairly generous conditions a type II error -- which has a $50\%$ chance of occurring -- will result in underestimating the total number of infections of an epidemic by over $5\%$ of the total population and the duration of an epidemic by over $20\%$ of the predicted duration. For $\pi/4$ a type II error in this setting will result in overestimating the total infections by nearly $10\%$ of the total population and the duration by approximately $10\%$ of the predicted one. Similarly, $\epsilon\approx .062$ when $\theta_0=(.21,.14)$ with all other parameters the same, and again from the left panels in Figure \ref{fig4} we observe significantly different predicted outcomes depending on whether or not the null hypothesis $H_0:\theta=\theta_0$ is rejected.

\begin{figure}[!ht]
\centering
\includegraphics[width=1\textwidth]{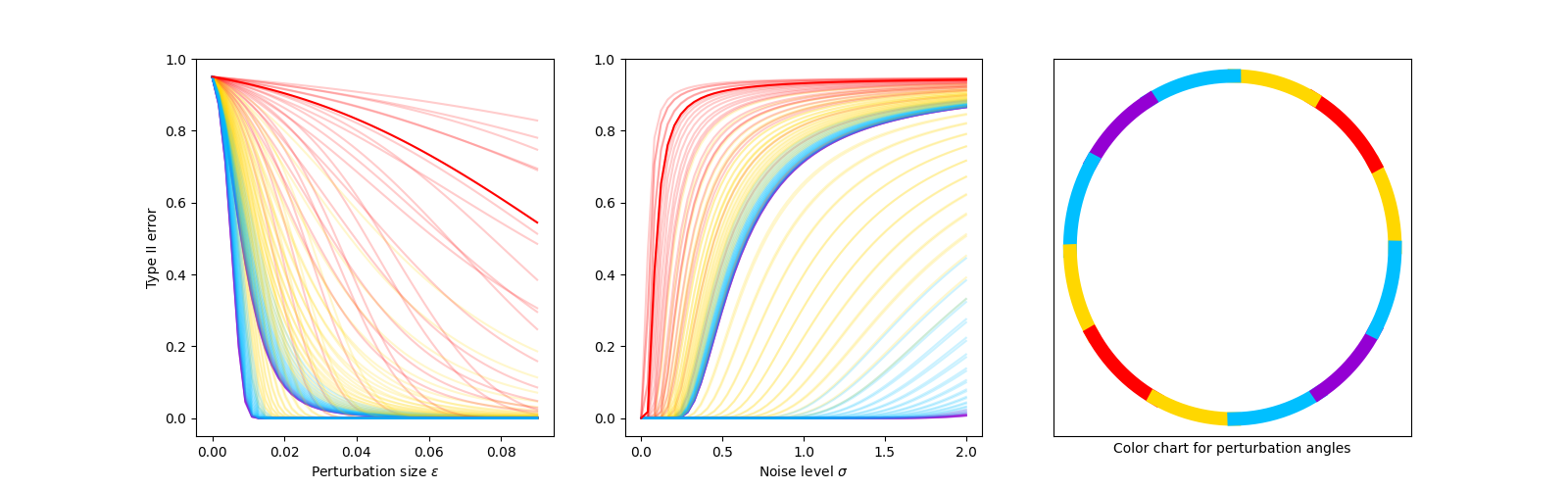}
\caption{\textit{Practical identifiability of $\delta$}. The left and center panels use the first type II error approximation in \edit{Approximation} \ref{prop2} to graph type II error as a function of perturbation size, $\epsilon$, and noise level, $\sigma$, respectively. Each of the rainbow colored curves in both panels correspond to one of $150$ different values of $\omega$ spread uniformly across $[0,2\pi)$. The color chart in the right panel indicates the colors corresponding to different angles $\omega$: The light red curves correspond to the $\omega$ closest to $\pi/4$ and $5\pi/4$, the yellow are those a bit farther away, the blue still farther, and the purple are those farthest from $\pi/4$ and $5\pi/4$, i.e. closest to $3\pi/4$ and $7\pi/4$. Finally, the dark red curve in each of the two panels corresponds to $\omega=\pi/4$ and $5\pi/4$, which have the same type II error. Note the rapid fall off in type II error as angles get farther from $\pi/4$ and $5\pi/4$, especially as a function of $\epsilon$. This agrees with the empirical observation in Figure \ref{fig:SIRExample} that MLE favors parameters lying along a line of slope $1$. In particular, the inverse problem for $\delta$ is practically identifiable.}
\label{fig5}
\end{figure}

While our empirical and theoretical results indicate the inverse problem of finding $\theta=(\beta,\gamma)$ is prone to error, they also show inference of $\delta$ is robust and reliable (see for instance Figures \ref{fig:SIRExample} and \ref{fig5}). In particular, since $\theta=(\beta,\gamma)$ is completely determined by\footnote{We choose to focus on $\gamma$ because, unlike $\beta$ which depends on the average number of people an infectious person will come in contact with, $\gamma$ depends only on the pathogen, not on human social behavior, and therefore tends to be more stable and better approximated in practice.} $\gamma$ and $\delta$, knowledge of $\delta$ reduces the inverse problem to finding $\gamma$, the reciprocal of the average number of days an individual is infectious. In this case our new hypothesis test becomes
\begin{linenomath*}
\bel\label{eq:hypothesisGamma}
	H_0: \gamma = \gamma_0 \quad\text{ vs. }\quad H_1 : \gamma = \gamma_0+\hat\epsilon.
\eel
\end{linenomath*}
for some real number $\hat\epsilon$. Furthermore, knowing $\delta$ implies $\theta$ lies on the line of slope $1$ with vertical intercept $-\delta$. So by a simple geometric argument (see Figure \ref{fig6}), the above hypothesis test is equivalent to the hypothesis test \eqref{eq:hypothesisSIR} with $\epsilon=\lvert\hat\epsilon\rvert\sqrt{2}$ and angle $\pi/4$ if $\hat\epsilon>0$ or $5\pi/4$ if $\hat\epsilon<0$. So by \eqref{eq:10} the type II error of \eqref{eq:hypothesisGamma} is
\begin{linenomath*}
\be
	\mathcal{E}_2 = 1 - \Phi\left(\Phi^{-1}(\alpha)+\frac{p\epsilon\sqrt{T}}{\sigma \sqrt{2}}\right)
		= 1 - \Phi\left(\Phi^{-1}(\alpha)+\frac{p\lvert\hat\epsilon\rvert\sqrt{T}}{\sigma}\right).
\ee
\end{linenomath*}
Thus rather than consider the original hypothesis test, one can first infer $\delta$, then consider the hypothesis test \eqref{eq:hypothesisGamma} with type II error rate as above.

\begin{figure}[!ht]
\centering
\begin{tabular}{c}
\includegraphics[width=.5\textwidth, height=.25\textheight]{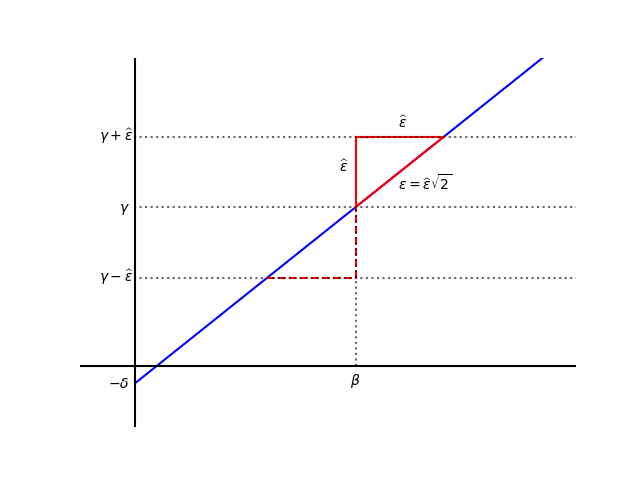}
\end{tabular}
\caption{\textit{Going from hypothesis test \eqref{eq:hypothesisGamma}} to \eqref{eq:hypothesisSIR}.}
\label{fig6}
\end{figure}


\subsection{Empirical analysis: NYC Covid cases, March 2020}\label{NYC}


In this section we discuss the extension of the theoretical results on parametric non-identifiability to a real world dataset. Consider the Spring 2020 COVID-19 outbreak in New York City. The New York City Health Department keeps a repository of all public COVID-19 data online \cite{NYCHD2021}. Using their daily case data as a proxy for new infections, we directly apply equation (\ref{eq:NoisySIR}) to the noisy data. To focus on estimation early in the pandemic, we focus on reported daily cases from February 29, 2020 through March 14, 2020, which approximately represent the first two weeks of the pandemic in New York City. This period precedes the statewide lockdown including the closing of schools on March 15th. However, the increasing awareness of COVID-19 and increased testing capacity strongly suggest that the contact rate $\beta$ and reporting rate $p$ were likely non-constant during this time. These parameters are also not jointly identifiable. Thus, we make the simplifying assumption that they are constant.  Below, we show estimates of $\beta$, $\gamma$, and $\sigma$ for fixed values of $p$ ranging from 0.01 to 0.25 consistent with the current literature on the underreporting rates of COVID-19 infection \cite{DEOLIVEIRA,Richterich2020,LAU2021110}.

To connect with the earlier analysis, we are following Case 2 as discussed in section 3.2, in which the noise is proportional to number of infections, i.e. $\sigma_t=N\sigma i_t$ for some $\sigma > 0$ which is also inferred via maximum likelihood. We thus model daily infections by
\begin{linenomath*}
\be
y_t = pN(s_{t-1}-s_t) + \sqrt{N i_t}\epsilon_t, \quad \epsilon_t \sim N(0,\sigma^2)
\ee
\end{linenomath*}
from which we obtain the log-likelihood function
\begin{linenomath*}
\be
    \ell(y_t\vert\beta,\gamma,\sigma) \approx -\frac{1}{2}\sum_{k=1}^{t} \frac{\left(y_k-pN(s_{k-1}-s_{k})\right)^2}{Ni_t\sigma^2}.
\ee
\end{linenomath*}
For a fixed value of $p=0.05$, maximizing the above likelihood gives estimates
\begin{linenomath*}
\be
(\hat{\beta},\hat{\gamma},\hat{\sigma}) = (4.82,4.22,1.37)
\ee
\end{linenomath*}
and a corresponding estimate of $\hat{R}_0 = 1.14$. The  SIR curve generated by the maximum likelihood estimates of $\beta$ and $\gamma$ is shown in Figure \ref{fig:NYCdata_with_MLEGenSIR}  with corresponding 95\% confidence regions based on the maximum likelihood estimate of $\sigma.$ Additional results for $p=0.01, 0.02$, and $0.1$ are also shown in the figure.

\begin{figure}[!ht]
    \centering
    \includegraphics[width=1.0\textwidth]{./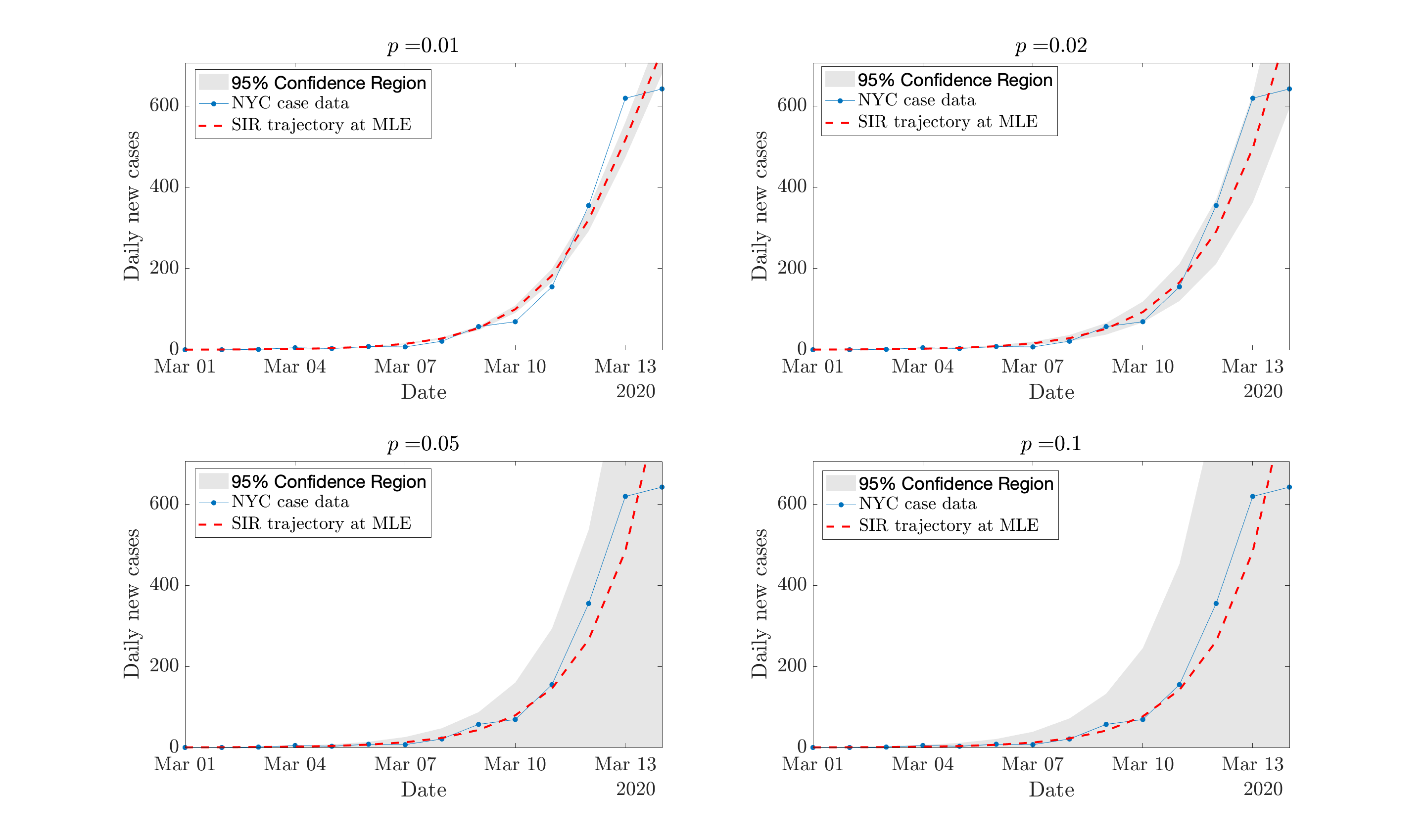}
    \caption{New York City public testing results for COVID-19 from the first known case on February 29, 2020 to March 15, 2020. We have used maximum likelihood estimation to generate an SIR trajectory through the noisy data for each reporting rate.}
    \label{fig:NYCdata_with_MLEGenSIR}
\end{figure}

Returning to the testing framework, Figure \ref{fig:Case2_power} provides type II error estimates based on the approximation of Equation (\ref{eq:10}) with significance level $\alpha = 0.1$, reporting rates $p=0.01$, $0.02$, $0.05$, and $0.1$, and $T = 14$ days of new infection counts. For all values of $p$ considered herein, the MLE of $\hat{\sigma}$ is greater than 0.75 and corresponds with Type II error greater than 80\% for all values of $\epsilon$ such that $\hat{\theta}_\epsilon(\pi/4)$ or $\hat{\theta}_\epsilon(5\pi/4)$ with corresponding $R_0 > 1$.

\begin{figure}[!ht]
    \centering
    \includegraphics[width=1.0\textwidth]{./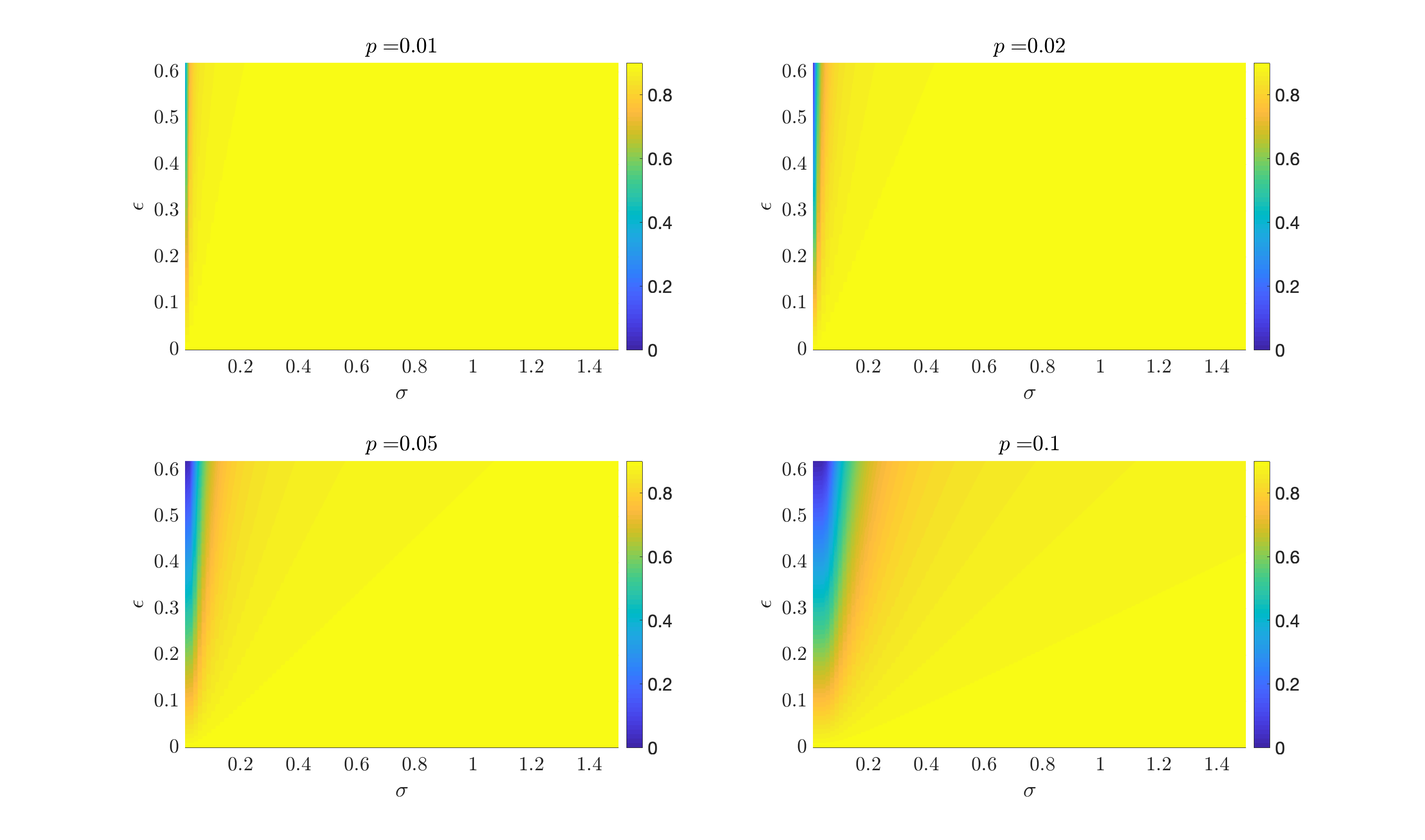}
    \caption{Type II error rate as a function of $\epsilon$ and $\sigma$ at significance level $\alpha =0.1$, reporting rate $p=0.15$, and $T=14$ days of new infection observations.}
    \label{fig:Case2_power}
\end{figure}

Thus, while there may be a large disparity between the true SIR parameters and our maximum likelihood estimates -- hence large differences in the estimate of $R_0$ -- the hypothesis testing framework has very low power to detect such differences. This result is based on the most difficult to detect perturbations in $\theta$. However, it provides pessimistic but important lower bounds on the extent to which one can rely on parameter estimates from noisy, early pandemic data.

One final note on the preceding example. The MLEs of $\beta$ and $\gamma$ in the previous analysis are quite sensitive to the reporting rate.  For reference, Table \ref{tab:NYC_params_by_p} provides corresponding MLE estimates for $\beta$, $\gamma$, and $\sigma$ as a function of $p$.
\begin{table}
\centering
\begin{tabular}{c|c| c | c | c }
     $p$ & $\hat{\beta}$ & $\hat{\gamma}$ & $\hat{\sigma}$ & $\hat{R}_0$    \\
     \hline
     0.01 & 19.69& 19.03 & 0.83 & 1.03 \\
     0.02 & 15.59 & 14.99 & 1.29 & 1.04\\
     0.03 & 9.54 & 8.95 & 1.38 & 1.07 \\
     0.04 & 6.43 & 5.82 & 1.38 & 1.10 \\
     0.05 & 4.82 & 4.22 & 1.37 & 1.14\\
     0.1 & 2.20 & 1.58 & 1.36 & 1.39\\
     0.15 & 1.44 & 0.82 & 1.35 & 1.7\\
     0.2 & 1.07 & 0.46 & 1.35 & 2.35 \\
     0.25 & 0.86 & 0.24 & 1.36 & 3.58
\end{tabular}
\vspace{\baselineskip}
\caption{Maximum likelihood estimates of $\beta$, $\gamma$, and $\sigma$ are shown for different choices of reporting rate $p$.  The corresponding estimates of $R_0$ using the MLEs is also provided.}
\label{tab:NYC_params_by_p}
\end{table}
However, the type II error plot in Figure \ref{fig:Case2_power} is largely unchanged for the range of $p$ in the preceding table.  Since $\sigma$ is fairly robust to different choices of $p$, our conclusion about the limited power of testing holds true for the range of $p$ considered.  Therefore,  one has limited statistical power to detect large differences in SIR model parameters in the worst case scenario, regardless of the choice of reporting rate. As such, we believe this article serves as a cautionary tale to those fitting SIR-type models in the early days of an epidemic.

There is an important distinction to make. Having low power to detect a difference is not equivalent to being unable to tell that there is a difference.  Certain parameter values are essentially impossible given natural assumptions about the dynamics of a pandemic. For example, $1/\gamma$ is the average time an infected individual can spread the disease before they are either recovered or removed from the population by quarantine.  Extremely large values of $\gamma$ and hence small values of $1/\gamma$, such as those attained in Table \ref{tab:NYC_params_by_p}, are likely unrealistic.   Thus, the inclusion of side or prior information on $\gamma$ and/or $\beta$ akin to the analysis in Section \ref{sec:Emp_Analysis} can greatly improve one's ability to disambiguate different SIR parameters.


\section{Discussion}\label{discussion}

The preceding analysis was based on a simple implementation of the SIR model.  Practitioners studying future outbreaks may consider a multitude of modifications to our model construction which result in different likelihood functions. Thus, we have decided to conclude this article with a short discussion of how one may adapt our techniques to these different settings to better understand issues of practical identifiability with \textit{noisy} or sparse observations.  

To construct an analytically tractable approximation to the type II error, we assumed the proportion of susceptible individuals remains essentially 1 and thus obtained a linear system, namely \eqref{approxODE}, that approximates the SIR equations. Such approximations are suitable locally in time and are therefore appropriate when one is focused on the early stages of an outbreak. Importantly, a similar approach can be used to construct analytic approximations to the dynamics of any epidemic model. Such expressions will depend on unknown SIR parameters, fixed parameters such as population size, and other parameters such as reporting rate or behavioral factors, as in \cite{Cori}. For example, Britton and Scalia Tomba \cite{Britton} assume the proportion of susceptible individuals remains 1 early in an epidemic to study the problem of inferring infection rate from observations of generation and serial times, which are often available via contact tracing. In all cases one can investigate the use of this and other realistic simplifications of the dynamics to approximate type II error and better understand potential limitations of their particular model. We believe this approach remains an interesting, potentially fruitful avenue toward understanding identifiability in a wide array of epidemic models.

\edit{On the theoretical side, the upper bound for the error term in Proposition \ref{prop:theory} of Appendix \ref{sec:theory} gives some rigorous justification for the approximate dynamics used throughout this work. However, the numerical results of the main text and Appendix \ref{sec:error} indicate the approximation is more accurate than the theoretical bound suggests. It is therefore an open question whether our theoretical bound on the approximation error can be improved upon, perhaps via other approximate solutions of the SIR model found in, for example, \cite{Turkyilmazoglu, barlow, schlickeiser} and references therein. Finally, since the approximation of $s$ by 1 is used in other models and to investigate other questions about epidemics \cite{Sauer2020, Britton}, it is also of interest whether estimates of the error in our setting can be used to control error for similar approximations in related settings.}


\section*{Statements and declarations}

\noindent The authors declare no competing interests.


\section*{Acknowledgments}

\noindent This work was partially funded by NIH R01-ES028804 from the National Institute of Environmental Health Sciences, and the Duke University DOMath summer program. OM also thanks
NSF-DMS-2038056 for partial support during this project.


\bibliographystyle{plain}
\bibliography{ref}


\begin{appendices}



\section{Derivation of \edit{Approximation} \ref{prop1}}\label{sec:prop1}


The main observation leading to \eqref{eq:Perturbation} is that $s$ remains close to $1$ early in the epidemic. Motivated by this, we replace $s$ with $1$ in the SIR model to obtain
\begin{linenomath*}
\bel \label{approxODE}
	\frac{ds}{dt} = -\beta i,
	\quad
	\frac{di}{dt} = (\beta-\gamma)i = \delta i.
\eel
\end{linenomath*}
The corresponding solution starting from $x_0=(s_0,i_0)$ is
\begin{linenomath*}
\bel\label{approxsol}
	\widetilde\varphi_t(x_0,\theta) = \bigg(s_0 - \frac{\beta}{\delta}\big(e^{\delta t}-1\big)i_0,\ e^{\delta t}i_0\bigg).
\eel
\end{linenomath*}
Fix $\omega\in[0,2\pi)$ and set $\varphi_t= \varphi_t(x_0,\theta)$, $\varphi^\epsilon_t= \varphi_t(x_0,\theta_\epsilon(\omega))$, $\widetilde{\varphi}_t=\widetilde{\varphi}_t(x_0,\theta)$, and $\widetilde{\varphi}^\epsilon_t= \widetilde{\varphi}_t(x_0,\theta_\epsilon(\omega))$. The expression of interest, $\lVert\varphi^\epsilon_t-\varphi_t\rVert$, can be written
\begin{linenomath*}
\bel\label{eq:error}
\begin{aligned}
	\lVert\varphi^\epsilon_t-\varphi_t\rVert &= \lVert\widetilde\varphi^\epsilon_t-\widetilde\varphi_t\rVert+E^\epsilon_t
\end{aligned}
\eel
\end{linenomath*}
where $E^\epsilon_t$ is the error incurred from approximating $\lVert\varphi^\epsilon_t-\varphi_t\rVert$ by $\lVert\widetilde\varphi^\epsilon_t-\widetilde\varphi_t\rVert$. As mentioned in the main text, we do not have explicit analytic control over $E^\epsilon_t$ but the numerical analysis in Appendix \ref{sec:error} indicates it is negligible compared to $\lVert\widetilde\varphi^\epsilon_t-\widetilde\varphi_t\rVert$ early in the epidemic. In particular, the approximation 
\begin{linenomath*}
\bel\label{normapprox}
	\lVert \widetilde\varphi^\epsilon_t-\widetilde\varphi_t\rVert
		\approx \lVert \varphi^\epsilon_t-\varphi_t\rVert
\eel
\end{linenomath*}
is valid up to roughly 80\% of the time of peak infection. Thus we turn attention to $\widetilde\varphi^\epsilon_t-\widetilde\varphi_t$. Fixing $t$ and Taylor expanding $\widetilde\varphi_t$ to first order in $\eta= (\beta,\delta)$ gives
\begin{linenomath*}
\be
\begin{aligned}
	\widetilde\varphi^\epsilon_t-\widetilde\varphi_t &= D_{\eta}\widetilde\varphi_t(x,\eta)\big(\eta_\epsilon(\omega)-\eta\big) + o\big(\lVert\eta\rVert^2\big) \\
		&\approx
		\epsilon\begin{pmatrix}
			-\tfrac{1}{\delta}\big(e^{\delta t}-1\big)i_0
			& 
			-\tfrac{\beta}{\delta}\bigg(te^{\delta t}-\tfrac{1}{\delta}\big(e^{\delta t}-1\big)\bigg)i_0 \\
			0
			&
			te^{\delta t}i_0
		\end{pmatrix}
		\begin{pmatrix} \cos(\omega) \\ \cos(\omega)-\sin(\omega)\end{pmatrix} \\
		&\approx
		\epsilon i_0\begin{pmatrix}
			-t
			& 
			-\beta t^2 \\
			\phantom{-}0
			&
			\phantom{-}te^{\delta t}
		\end{pmatrix}
		\begin{pmatrix} \cos(\omega) \\ \cos(\omega)-\sin(\omega)\end{pmatrix}
\end{aligned}
\ee
\end{linenomath*}
where $D_{\eta}$ is the derivative in $\eta$ and $\eta_\epsilon(\omega)=\eta+\epsilon(\cos(\omega)-\sin(\omega))$. The first approximation is a result of simply dropping the $o(\lVert\eta\rVert^2)$ term and the second is obtained by substituting the first order Taylor approximation $\exp(\delta t)-1=\delta t+o(\delta^2)$ about $\delta=0$ into the expression above it. We see from the latter expression that any vector $(x_1,x_2)$ in $\mathbb{R}^2$ with $x_2\neq 0$ will grow exponentially in time under the above matrix due to the $te^{\delta t}$ term in the bottom right. On the other hand, the first component $x_1$ will only grow linearly in time provided $x_2=0$. So the magnitude of growth is minimized for vectors of the form $(x_1,0)$. This implies that, under the above approximations which come at a cost of $o(\lVert\eta\rVert^2)$ and $o(\delta^2)$, respectively, the difference $\tilde\varphi^\epsilon_t-\tilde\varphi_t$ will grow the least when $(\cos(\omega), \cos(\omega)-\sin(\omega))=\pm (1,0)$. Therefore the perturbations $\eta_\epsilon(\omega)$ that yield the smallest separation between $\widetilde\varphi^\epsilon_t$ and $\widetilde\varphi_t$ are those corresponding to the directions $\pm(1,0)$ or, equivalently, the angles $\omega=\pi/4$ and $5\pi/4$. Now
\begin{linenomath*}
\bel\label{eq7}
	\widetilde{\varphi}^\epsilon_t(x_0,\theta_\epsilon(\omega))-\widetilde{\varphi}_t(x_0,\theta) =
	\begin{pmatrix}
		\tfrac{\beta_\epsilon}{\delta_\epsilon}-\tfrac{\beta}{\delta}+\big(\tfrac{\beta}{\delta}-\tfrac{\beta_\epsilon}{\delta_\epsilon}e^{\epsilon t(\cos\omega-\sin\omega)}\big)e^{\delta t} \\
		\big(e^{\epsilon t(\cos\omega-\sin\omega)}-1\big)e^{\delta t}
	\end{pmatrix}
	i_0
\eel
\end{linenomath*}
where $\beta_\epsilon= \beta+\epsilon\cos\omega$ and $\delta_\epsilon= \delta+\epsilon(\cos\omega-\sin\omega)$. Plugging in $\pi/4$ gives
\begin{linenomath*}
\bel\label{eq:13}
\begin{aligned}
	\widetilde{\varphi}^\epsilon_t(x_0,\theta_\epsilon(\omega_1))-\widetilde{\varphi}_t(x_0,\theta) &=
	\begin{pmatrix}
		1-e^{\delta t} \\ 0
	\end{pmatrix}
	\frac{\epsilon i_0}{\delta\sqrt{2}}
\end{aligned}
\eel
\end{linenomath*}
and similarly for $5\pi/4$, only negative. Thus, in combination with \eqref{normapprox},
\begin{linenomath*}
\be
\begin{aligned}
	\frac{\epsilon}{\delta\sqrt{2}}\big(e^{\delta t}-1\big)i_0 &\lesssim \lVert \widetilde\varphi^\epsilon_t-\widetilde\varphi_t\rVert
		\approx \lVert \varphi^\epsilon_t-\varphi_t\rVert
\end{aligned}
\ee
\end{linenomath*}
for all $\omega$, which is precisely \eqref{eq:Perturbation}.


\section{Numerical analysis of error}\label{sec:error}


\begin{figure}[!ht]
\centering
\includegraphics[width=1\textwidth]{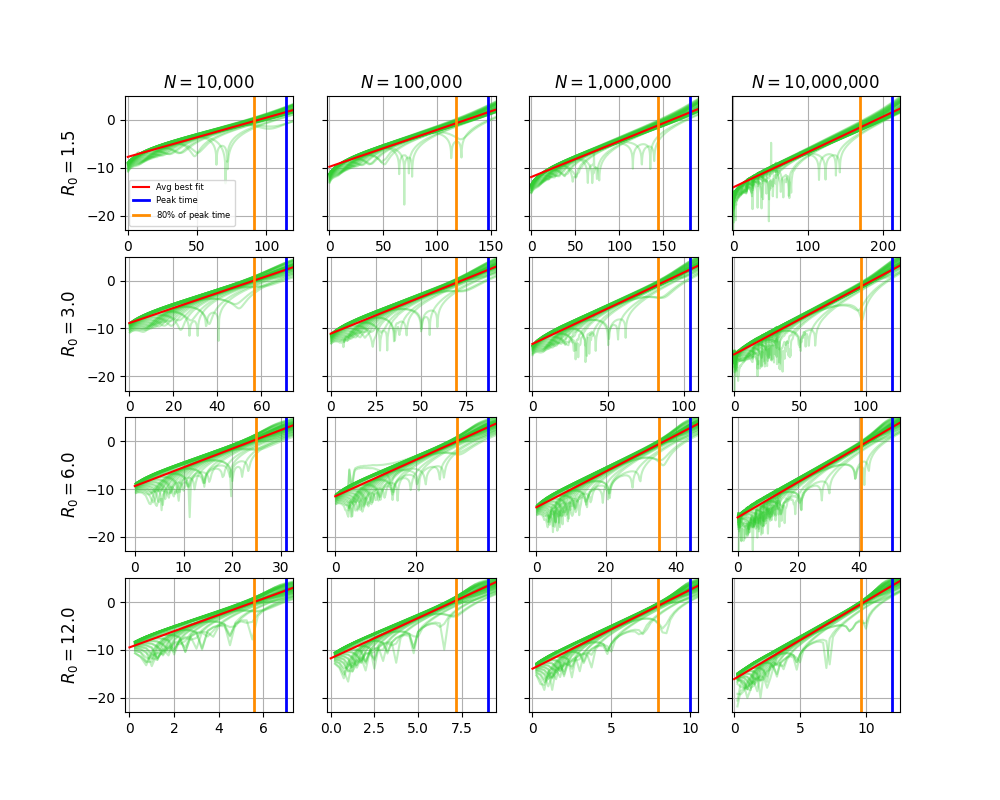}
\caption{\textit{Error analysis corresponding to Figure \ref{fig2}}. As in Figure \ref{fig2}, the horizontal axis in each graph is the number of days since the start of the epidemic, the vertical blue line is the peak time, and the orange line is 80\% of the peak time. The vertical axis is the logarithm of the relative error, $\log\lvert E^\epsilon_t\rvert-\log\lVert\varphi^\epsilon_t-\varphi_t\rVert$, at time $t$. There are 50 green curves in each plot which are the relative errors corresponding to 50 angles in the intervals $[\pi/4-\pi/12,\pi/4+\pi/12)$ and $[5\pi/4-\pi/12,5\pi/4+\pi/12)$. The red line in each plot is the average linear approximation of the 50 green curves. Also as in Figure \ref{fig2}, the first through fourth columns have population sizes $10^4, 10^5, 10^6$, and $10^7$, respectively with one initial infection in each case, and the first through fourth rows have parameters $(\beta,\gamma,\epsilon)=(.21,.14,.03), (.21,.07,.03), (.42,.07,.06)$, and $(1.68,.14,.1)$ which give $R_0$ values of 1.5, 3, 6, and 12. The main point is that for every combination of parameters the relative error is exponentially small until about 80\% of the peak time at which point it becomes $\mathcal{O}(1)$ and subsequently blows up exponentially.} \label{fig:logerror}
\end{figure}

In this section we revisit the expression
\begin{linenomath*}
\be
\begin{aligned}
	\lVert\varphi-\varphi_t\rVert &= \lVert\widetilde\varphi^\epsilon_t-\widetilde\varphi_t\rVert+E^\epsilon_t
\end{aligned}
\ee
\end{linenomath*}
which is \eqref{eq:error} in Appendix \ref{sec:prop1}; see the sentences preceding \eqref{eq:error} for notation. A key requirement in the derivation of \edit{Approximation} \ref{prop1} is that $\lVert\varphi_t^\epsilon-\varphi_t\rVert$ is well-approximated by $\lVert\widetilde\varphi_t^\epsilon-\widetilde\varphi_t\rVert$. By \eqref{eq:error} this is the case whenever
\begin{linenomath*}
\bel\label{eq:small}
\begin{aligned}
	\frac{\lvert E^\epsilon_t\rvert}{\lVert\varphi^\epsilon_t-\varphi_t\rVert} < \tau
\end{aligned}
\eel
\end{linenomath*}
for some prescribed tolerance $\tau$, which we take to be 1 both for simplicity and because it agrees with the numerical observations from Figure \ref{fig2}. As mentioned before, we do not have sufficient control to verify \eqref{eq:small} analytically. However, we can verify \eqref{eq:small} numerically -- in our case with the odeint function from the scipy.integrate package in Python. Specifically, we numerically solve both the SIR and approximate SIR equations and compute
\begin{linenomath*}
\bel\label{eq:logsmall}
\begin{aligned}
	\log\lvert E^\epsilon_t\rvert-\log\lVert\varphi^\epsilon_t-\varphi_t\rVert
\end{aligned}
\eel
\end{linenomath*}
for 25 evenly spaced angles $\omega$ in the interval $[\pi/4-\pi/12,\pi/4+\pi/12)$ and 25 more such angles in $[5\pi/4-\pi/12,5\pi/4+\pi/12)$. We restrict attention to these intervals because, as observed in Figure \ref{fig2}, these are the angles for which the inverse problem is least practically identifiable. The results, plotted in Figure \ref{fig:logerror}, indicate \eqref{eq:logsmall} grows approximately linearly in time. For each of the 16 subplots in Figure \ref{fig:logerror} (which correspond to 16 different $\beta, \gamma, \epsilon$, and $N$ combinations) we average the linear approximations for each of the 50 curves to obtain a single ``average" linear approximation indicated by the red line in each subplot. Table \ref{tab:logerror} gives the equations for each of these averaged lines. So, for example, when $(\beta,\gamma)=(.21,.07)$, $\epsilon=.03$, and $N=10^6$ the table gives
\begin{linenomath*}
\be
\begin{aligned}
	\log\lvert E^\epsilon_t\rvert-\log\lVert\varphi^\epsilon_t-\varphi_t\rVert \approx 0.15t - 13.3.
\end{aligned}
\ee
\end{linenomath*}
Setting this expression equal to 0 (which corresponds to $\tau=1$) and solving for $t$ gives $t=13.3/0.15\approx 89$ days, i.e. the error becomes intolerable at roughly 89 days. We can then divide 89 by the peak time to estimate the percentage of the peak time at which the error becomes sufficiently large that the approximation, and hence the lower bound from \edit{Approximation} \ref{prop1}, no longer holds. The resulting percentages are given in Table \ref{tab:percentpeak} and motivate our choice of 80\% in \edit{Approximation} \ref{prop1}.

\begin{table}[!ht]
\centering
\begin{tabular}{|c|c|c|c|c|}
	\hline
	& & & & \\
	$\boldsymbol{(\beta,\gamma,\epsilon)}$ & $\boldsymbol{N=10^4}$ & $\boldsymbol{N=10^5}$ &  $\boldsymbol{N=10^6}$ & $\boldsymbol{N=10^7}$ \\
	\hline
	& & & & \\
	$(.21,.14,.03)$ & $.08t-7.7$ & $.08t-9.8$ & $.08t-11.9$ & $.07t-14.1$ \\
	\hline
	& & & & \\
	$(.21,.07,.03)$ & $.16t-8.9$ & $.15t-11.1$ & $.15t-13.3$ & $.15t-15.4$ \\
	\hline
	& & & & \\
	$(.42,.07,.06)$ & $.39t-9.4$ & $.38t-11.5$ & $.38t-13.8$ & $.37t-15.9$ \\
	\hline
	& & & & \\
	$(1.68,.14,.1)$ & $1.7t-9.5$ & $1.68t-11.8$ & $1.65t-14$ & $1.63t-16.1$ \\
	\hline
\end{tabular}
\vspace{\baselineskip}
\caption{\textit{Linear approximations of log relative error}. Equations for the average linear approximations to the log relative error $\log\lvert E^\epsilon_t\rvert-\log\lVert\varphi^\epsilon_t-\varphi_t\rVert$ as functions of time for each combination of parameter values appearing in Figure \ref{fig2} and \ref{fig:logerror}.}
\label{tab:logerror}
\end{table}

\begin{table}[!ht]
\centering
\begin{tabular}{|c|c|c|c|c|}
	\hline
	& & & & \\
	$\boldsymbol{(\beta,\gamma,\epsilon)}$ & $\boldsymbol{N=10^4}$ & $\boldsymbol{N=10^5}$ &  $\boldsymbol{N=10^6}$ & $\boldsymbol{N=10^7}$ \\
	\hline
	& & & & \\
	$(.21,.14,.03)$ & $83\%$ & $86\%$ & $88\%$ & $90\%$ \\
	\hline
	& & & & \\
	$(.21,.07,.03)$ & $80\%$ & $83\%$ & $85\%$ & $87\%$ \\
	\hline
	& & & & \\
	$(.42,.07,.06)$ & $77\%$ & $80\%$ & $83\%$ & $84\%$ \\
	\hline
	& & & & \\
	$(1.68,.14,.1)$ & $80\%$ & $78\%$ & $85\%$ & $83\%$ \\
	\hline
\end{tabular}
\vspace{\baselineskip}
\caption{\textit{Percent of peak time at which error becomes intolerable}. Approximate percent of peak time when the log relative error $\log\lvert E^\epsilon_t\rvert-\log\lVert\varphi^\epsilon_t-\varphi_t\rVert$ becomes intolerable, i.e. $\mathcal{O}(1)$.}
\label{tab:percentpeak}
\end{table}

\newpage

\edit{

\section{Theoretical analysis of error}\label{sec:theory}


In this section we again revisit \eqref{eq:error} and prove the following.

\begin{prop}\label{prop:theory}
For every initial condition $x_0=(s_0,i_0)$, parameter pair $(\beta,\gamma)$, $\epsilon>0$, $\omega$ in $[0,2\pi)$, and $t\geq 0$,
\bel\label{eq:theory}
	\lvert E^\epsilon_t\rvert \leq \left( \frac{\sqrt{2\beta_\epsilon^2+\gamma_\epsilon^2}}{\delta_\epsilon}\left(e^{\delta_\epsilon t}-1\right) + \frac{\sqrt{2\beta^2+\gamma^2}}{\delta}\left(e^{\delta t}-1\right)\right)i_0,
\eel
where, as before, $\delta=\beta-\gamma$, $\beta_\epsilon=\beta+\epsilon\cos(\omega)$, $\gamma_\epsilon=\gamma+\epsilon\sin(\omega)$, and $\delta_\epsilon=\beta_\epsilon-\gamma_\epsilon$.
\end{prop}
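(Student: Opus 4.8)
The plan is to work directly from the defining relation of the error term, namely that $E^\epsilon_t$ satisfies $\lVert\varphi^\epsilon_t-\varphi_t\rVert = \lVert\widetilde\varphi^\epsilon_t-\widetilde\varphi_t\rVert + E^\epsilon_t$, so that
\be
	\lvert E^\epsilon_t\rvert = \big\lvert \lVert\varphi^\epsilon_t-\varphi_t\rVert - \lVert\widetilde\varphi^\epsilon_t-\widetilde\varphi_t\rVert\big\rvert.
\ee
Since the difference of two nonnegative numbers each lying in $[0,C]$ has absolute value at most $C$, it suffices to exhibit a single quantity that dominates both the exact distance $\lVert\varphi^\epsilon_t-\varphi_t\rVert$ and the approximate distance $\lVert\widetilde\varphi^\epsilon_t-\widetilde\varphi_t\rVert$. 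I would take this common bound to be $B(\theta_\epsilon,t)+B(\theta,t)$, where $B(\theta,t):=\tfrac{\sqrt{2\beta^2+\gamma^2}}{\delta}(e^{\delta t}-1)i_0$ is the term attached to $\theta$ on the right of \eqref{eq:theory}. Applying the triangle inequality through the common initial point $x_0$ then reduces the whole problem to bounding the displacement $\lVert\psi_t-x_0\rVert$ of each of the four trajectories $\varphi,\varphi^\epsilon,\widetilde\varphi,\widetilde\varphi^\epsilon$ from $x_0$ by the corresponding $B$.

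The core estimate is a uniform bound on how far any such trajectory travels. Writing $\psi_t-x_0=\int_0^t\dot\psi_\tau\,d\tau$ and passing the norm inside the integral gives $\lVert\psi_t-x_0\rVert\le\int_0^t\lVert\dot\psi_\tau\rVert\,d\tau$. For the exact SIR field one has $\lVert\dot\psi_\tau\rVert=i_\tau\sqrt{\beta^2 s_\tau^2+(\beta s_\tau-\gamma)^2}$. Because $s$ is nonincreasing with $s_0\le 1$, we have $s_\tau\in[0,1]$, and the elementary inequality $\beta^2 s^2+(\beta s-\gamma)^2\le\beta^2+(\beta^2+\gamma^2)=2\beta^2+\gamma^2$ (using $s^2\le 1$ and $-2\beta\gamma s\le 0$) yields $\lVert\dot\psi_\tau\rVert\le i_\tau\sqrt{2\beta^2+\gamma^2}$. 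The analogous computation for the linearized field of \eqref{approxODE} gives $\lVert\dot{\widetilde\psi}_\tau\rVert=\widetilde i_\tau\sqrt{\beta^2+\delta^2}\le\widetilde i_\tau\sqrt{2\beta^2+\gamma^2}$, since $\beta^2+\delta^2=2\beta^2-2\beta\gamma+\gamma^2\le 2\beta^2+\gamma^2$.

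To finish, I would control the infectious fraction along each trajectory. From $di/d\tau=(\beta s_\tau-\gamma)i_\tau\le\delta i_\tau$ and $i_\tau\ge 0$, an integrating-factor (Grönwall) argument shows $i_\tau\le e^{\delta\tau}i_0$, while $\widetilde i_\tau=e^{\delta\tau}i_0$ holds exactly by \eqref{approxsol}. Integrating $\sqrt{2\beta^2+\gamma^2}\,e^{\delta\tau}i_0$ over $[0,t]$ produces precisely $B(\theta,t)$, and the identical argument with $\beta_\epsilon,\gamma_\epsilon,\delta_\epsilon$ produces $B(\theta_\epsilon,t)$ for the perturbed trajectories (the bound is valid for any sign of $\delta_\epsilon$, reading $(e^{\delta_\epsilon t}-1)/\delta_\epsilon$ as $t$ in the degenerate case $\delta_\epsilon=0$). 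Combining, both $\lVert\varphi^\epsilon_t-\varphi_t\rVert$ and $\lVert\widetilde\varphi^\epsilon_t-\widetilde\varphi_t\rVert$ are at most $B(\theta_\epsilon,t)+B(\theta,t)$, which is exactly \eqref{eq:theory}. I expect the only genuine obstacle to be the core velocity-norm step: one must verify the quadratic bound $\beta^2 s^2+(\beta s-\gamma)^2\le 2\beta^2+\gamma^2$ on $[0,1]$ and, more importantly, establish the comparison inequality $i_\tau\le e^{\delta\tau}i_0$ that prevents the constant from inheriting the unbounded nonlinear growth of the true trajectory; the remaining steps are routine manipulations with the triangle inequality.
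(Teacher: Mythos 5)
Your proof is correct, but it reaches \eqref{eq:theory} by a genuinely different decomposition than the paper's. The paper applies the reverse triangle inequality and then the triangle inequality to obtain $\lvert E^\epsilon_t\rvert \le \lVert\widetilde\varphi^\epsilon_t-\varphi^\epsilon_t\rVert + \lVert\widetilde\varphi_t-\varphi_t\rVert$, pairing each exact trajectory with its linearized counterpart \emph{at the same parameter}, and then bounds each summand by writing it as $\int_0^t \tfrac{d}{d\tau}(\widetilde\varphi_\tau-\varphi_\tau)\,d\tau$ and estimating the quadratic form in $\widetilde\iota_\tau-i_\tau s_\tau$ and $\widetilde\iota_\tau-i_\tau$ by $(2\beta^2+\gamma^2)\widetilde\iota_\tau^2$. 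You instead note that $\bigl|\,\lVert a\rVert-\lVert b\rVert\,\bigr|\le C$ whenever both norms lie in $[0,C]$, and take $C$ to be the sum of the displacements of the trajectories from the common starting point $x_0$, each controlled by integrating the speed of a \emph{single} trajectory. Both routes ultimately rest on the same two ingredients --- the algebraic bound by $2\beta^2+\gamma^2$ (yours via $\beta^2 s^2+(\beta s-\gamma)^2\le 2\beta^2+\gamma^2$ on $s\in[0,1]$, the paper's via the cross-term sign and $\widetilde\iota_\tau\ge i_\tau\ge i_\tau s_\tau$) and the Gr\"onwall comparison $i_\tau\le e^{\delta\tau}i_0$ --- and they land on exactly the same constant. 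The structural difference is worth noting: the paper's intermediate quantities $\lVert\widetilde\varphi_t-\varphi_t\rVert$ are genuine linearization errors, which are in fact far smaller than the stated bound early in the epidemic, so that decomposition leaves room for future sharpening; your intermediate quantities $\lVert\varphi_t-x_0\rVert$ are of the same order as the final bound, so in your argument all the slack is concentrated in the very first step $|a-b|\le\max(a,b)$. In exchange, your version is slightly more elementary, avoids the (unstated in the paper, though easy) justification that $\widetilde\iota_\tau\ge i_\tau$ pointwise beyond what Gr\"onwall already gives, and explicitly covers the degenerate cases $\delta_\epsilon\le 0$ that the proposition's hypothesis ($\epsilon>0$ arbitrary) technically permits.
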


\begin{proof}[\textit{Proof}.]
By \eqref{eq:error}, the reverse triangle inequality, and the triangle inequality,
\bel\label{eq:proof1}
\begin{aligned}
	\lvert E^\epsilon_t\rvert &= \left\lvert \lVert \varphi^\epsilon_t-\varphi_t\rVert - \lVert \widetilde\varphi^\epsilon_t-\widetilde\varphi_t\rVert\right\rvert
		\leq \lVert \varphi^\epsilon_t-\varphi_t-\widetilde\varphi^\epsilon_t-\widetilde\varphi_t\rVert \\
		&\leq \lVert \widetilde\varphi^\epsilon_t-\varphi^\epsilon_t\rVert + \lVert\widetilde\varphi_t - \varphi_t\rVert.
\end{aligned}
\eel
Next, note that $\widetilde\varphi_t-\varphi_t = \int_0^t \frac{d}{d\tau}(\widetilde\varphi_\tau-\varphi_\tau)d\tau$ and hence
\be
\begin{aligned}
	\lVert\widetilde\varphi_t-\varphi_t\rVert &\leq \int_0^t \left\lVert\tfrac{d}{d\tau}(\widetilde\varphi_\tau-\varphi_\tau)\right\rVert d\tau \\
		&= \int_0^t \left\lVert \left(-\beta(\widetilde \iota_\tau-i_\tau s_\tau), \beta(\widetilde \iota_\tau-i_\tau s_\tau)-\gamma(\widetilde \iota_\tau-i_\tau)\right)\right\rVert d\tau \\
		&= \int_0^t \sqrt{2\beta^2(\widetilde\iota_\tau-i_\tau s_\tau)^2 - 2\beta\gamma(\widetilde\iota_\tau-i_\tau s_\tau)(\widetilde\iota_\tau - i\tau) + \gamma^2(\widetilde\iota_\tau - i_\tau)^2} d\tau \\
		&\leq \int_0^t \widetilde\iota_\tau\sqrt{2\beta^2+\gamma^2} d\tau
		= \frac{\sqrt{2\beta^2+\gamma^2}}{\delta}\left(e^{\delta t}-1\right)i_0.
\end{aligned}
\ee
The first equality holds because $\widetilde\varphi$ and $\varphi$ are solutions of \eqref{approxODE} and \eqref{sir}, respectively, and the second inequality holds because $\widetilde\iota_t\geq i_t\geq i_ts_t$ for all $t\geq 0$ and hence
\bel\label{eq:proof2}
	2\beta^2(\widetilde\iota_\tau-i_\tau s_\tau)^2 - 2\beta\gamma(\widetilde\iota_\tau-i_\tau s_\tau)(\widetilde\iota_\tau - i\tau) + \gamma^2(\widetilde\iota_\tau - i_\tau)^2 \leq (2\beta^2+\gamma^2)\widetilde\iota_\tau^2
\eel
for every $\tau\geq 0$. Finally, the last equality follows from $\widetilde\iota_\tau = e^{\delta\tau}i_0$. Since the above holds for arbitrary $\beta$ and $\gamma$, we also have
\be
	\lVert\widetilde\varphi^\epsilon_t-\varphi^\epsilon_t\rVert \leq \frac{\sqrt{2\beta_\epsilon^2+\gamma_\epsilon^2}}{\delta_\epsilon}\left(e^{\delta_\epsilon t}-1\right)i_0.
\ee
Combining with \eqref{eq:proof1} therefore gives
\begin{equation*}
	\lvert E^\epsilon_t\rvert \leq \left( \frac{\sqrt{2\beta_\epsilon^2+\gamma_\epsilon^2}}{\delta_\epsilon}\left(e^{\delta_\epsilon t}-1\right) + \frac{\sqrt{2\beta^2+\gamma^2}}{\delta}\left(e^{\delta t}-1\right)\right)i_0,
\end{equation*}
as claimed.
\end{proof}

A few remarks are in order. First, if $i_0=c/N$ for some constant $c$, then $i_0$, and hence $E^\epsilon_t$ for any finite $t$, go to $0$ as the population size $N$ goes to infinity. Thus Approximation \ref{prop1} improves with increasing $N$, which agrees with the numerical results of the last section. Second, the bound \eqref{eq:theory} is not sharp. This is due to the use of the triangle and reverse triangle inequalities, as well as the other inequalities implemented in the above proof. In particular, inequality \eqref{eq:proof2} is quite coarse because we have essentially thrown away all $i_\tau$ and $s_\tau$ terms. This is rather necessary since the SIR model has no known exact analytic solution, so gaining explicit control over terms involving $i_\tau$ and $s_\tau$ is, to the best of our knowledge, largely intractable \cite{barlow}. Finally, the upper bound \eqref{eq:theory} grows exponentially in time. This is a byproduct of the fact that the proportion of infected individuals $\widetilde\iota$ in the approximate dynamics \eqref{approxODE} grows exponentially without bound. On the other hand, growth of the $i$ compartment in the true SIR dynamics \eqref{sir} is offset by the decreasing number of susceptible individuals as the epidemic progresses, i.e. fewer susceptible individuals means there are fewer people to infect. This is the primary reason why the approximate dynamics are only valid early in the epidemic, and certainly do not hold beyond the time of peak infection.}



\section{Derivation of \edit{Approximation} \ref{prop2}}


Fix $\omega\in[0,2\pi)$ and set $\theta_\epsilon= \theta_\epsilon(\omega)$, $\Delta_t^\epsilon=\Delta_t(\theta_\epsilon)$, and $\Delta_t^0= \Delta_t(\theta_0)$. The likelihood for observed data $Y_{1:T}$ is
\begin{linenomath*}
\be
	L(Y_{1:T}\vert\theta_0) = \prod_{t=1}^T \frac{1}{\sqrt{2\pi\sigma_t^2}}\exp\bigg(-\frac{1}{2\sigma_t^2}\big(Y_t-p\Delta_t^0\big)^2\bigg)
\ee
\end{linenomath*}
and similarly for $L(Y_{1:T}\vert\theta_\epsilon)$. So the log-likelihood ratio between $\theta_\epsilon$ and $\theta_0$ is
\begin{linenomath*}
\be
\begin{aligned}
	\log\bigg(\frac{L(Y_{1:T}\vert\theta_\epsilon)}{L(Y_{1:T}\vert\theta_0)}\bigg) &= \sum_{t=1}^T \frac{1}{2\sigma_t^2}\bigg[\big(Y_t-p\Delta_t^0\big)^2-\big(Y_t-p\Delta_t^\epsilon\big)^2\bigg] \\
		&= \sum_{t=1}^T \frac{1}{2\sigma_t^2}\bigg[2pY_t\big(\Delta_t^\epsilon-\Delta_t^0\big)-p^2\big((\Delta_t^\epsilon)^2-(\Delta_t^0)^2\big)\bigg].
\end{aligned}
\ee
\end{linenomath*}
If the $Y_t$ satisfy \eqref{eq:NoisySIR} with parameter $\theta_\epsilon$, i.e. $Y_t=p\Delta_t^\epsilon+\xi_t$, then
\begin{linenomath*}
\be
	2pY_t\big(\Delta_t^\epsilon-\Delta_t^0\big)-p^2\big((\Delta_t^\epsilon)^2-(\Delta_t^0)^2\big) = p^2\big(\Delta_t^\epsilon-\Delta_t^0\big)^2 +2p\big(\Delta_t^\epsilon-\Delta_t^0\big)\xi_t.
\ee
\end{linenomath*}
So, letting $\eta$ be the value in \eqref{eq:LRT} for the likelihood ratio test,
\begin{linenomath*}
\be
\begin{aligned}
	\mathcal{E}_2(\omega) &= \mathbb{P}_{\theta_\epsilon}\bigg(\frac{L(Y_{1:T}\vert\theta_\epsilon)}{L(Y_{1:T}\vert\theta_0)} < \eta\bigg)
		= \mathbb{P}_{\theta_\epsilon}\bigg(\log\bigg(\frac{L(Y_{1:T}\vert\theta_\epsilon)}{L(Y_{1:T}\vert\theta_0)}\bigg) < \log\eta\bigg) \\
		&= \mathbb{P}_{\theta_\epsilon}\bigg(\sum_{t=1}^T \frac{1}{2\sigma_t^2}\bigg[p^2\big(\Delta_t^\epsilon-\Delta_t^0\big)^2 +2p\big(\Delta_t^\epsilon-\Delta_t^0\big)\xi_t\bigg] < \log\eta\bigg) \\
		&= \mathbb{P}_{\theta_\epsilon}\bigg(\sum_{t=1}^T \frac{p}{\sigma_t^2}\big(\Delta_t^\epsilon-\Delta_t^0\big)\xi_t < \log\eta-\frac{1}{2}V_T^\epsilon(\omega)\bigg).
\end{aligned}
\ee
\end{linenomath*}
where 
\begin{linenomath*}
\bel\label{eq:V_T}
	V_T^\epsilon(\omega) = \sum_{t=1}^T \frac{p^2}{\sigma_t^2}(\Delta_t^\epsilon-\Delta_t^0)^2. 
\eel
\end{linenomath*}
Now $\xi_t\sim\mathcal{N}(0,\sigma_t^2)$ implies $\sum_{t=1}^T(p/\sigma_t^2)(\Delta_t^\epsilon-\Delta_t^0)\xi_t\sim\mathcal{N}(0,V^\epsilon_T)$ and hence
\begin{linenomath*}
\be
	\frac{1}{\sqrt{V_T^\epsilon(\omega)}}\sum_{t=1}^T \frac{p}{\sigma^2_t}(\Delta^\epsilon_t-\Delta^0_t)\xi_t \sim \mathcal{N}(0, 1).
\ee
\end{linenomath*}
So, letting $Z$ denote a standard normal random variable,
\begin{linenomath*}
\be
	\mathcal{E}_2(\omega) = \mathbb{P}_{\theta_\epsilon}\bigg( Z < \frac{\log\eta}{\sqrt{V_T^\epsilon(\omega)}}-\frac{1}{2}\sqrt{V_T^\epsilon(\omega)}\bigg)
		= \Phi\bigg(\frac{\log \eta}{\sqrt{V_T^\epsilon(\omega)}}-\frac{1}{2}\sqrt{V_T^\epsilon(\omega)}\bigg).
\ee
\end{linenomath*}
By an entirely similar computation (note the symmetry between $\theta_\epsilon$ and $\theta_0$),
\begin{linenomath*}
\be
	\alpha = \mathbb{P}_{\theta_0}\bigg(\frac{L(Y_{1:T}\vert\theta_\epsilon)}{L(Y_{1:T}\vert\theta_0)} \geq \eta\bigg)
		= \Phi\bigg(-\frac{\log \eta}{\sqrt{V_T^\epsilon(\omega)}}-\frac{1}{2}\sqrt{V_T^\epsilon(\omega)}\bigg)
\ee
\end{linenomath*}
and so $\frac{\log\eta}{\sqrt{V_T^\epsilon(\omega)}} = -\Phi^{-1}(\alpha)-\frac{1}{2}\sqrt{V_T^\epsilon(\omega)}$. Therefore, since $\Phi(-x)=1-\Phi(x)$,
\begin{linenomath*}
\bel\label{eq:171}
	\mathcal{E}_2(\omega) = \Phi\bigg(-\Phi^{-1}(\alpha)-\sqrt{V_T^\epsilon(\omega)}\bigg)
		= 1 - \Phi\bigg(\Phi^{-1}(\alpha)+\sqrt{V_T^\epsilon(\omega)}\bigg).
\eel
\end{linenomath*}
To obtain the approximations, first note that from \eqref{approxsol} we have
\begin{linenomath*}
\bel\label{eq:9}
	\Delta^\epsilon_t(\omega)-\Delta^0_t \approx \left[\beta_\epsilon\bigg(\frac{e^{-\delta_\epsilon}-1}{-\delta_\epsilon}\bigg)e^{\epsilon(\cos\omega-\sin\omega)t}-\beta\bigg(\frac{e^{-\delta}-1}{-\delta}\bigg)\right]Ni_0e^{\delta t},
\eel
\end{linenomath*}
where $\beta_\epsilon=\beta+\epsilon\cos\omega$ and $\delta_\epsilon=\delta+\epsilon(\cos\omega-\sin\omega)$. This can be further simplified by using the Taylor approximation $(e^x-1)/x\approx 1$ to obtain
\begin{linenomath*}
\bel\label{eq:181}
	\Delta^\epsilon_t(\omega)-\Delta^0_t \approx \left[\beta_\epsilon e^{\epsilon(\cos\omega-\sin\omega)t}-\beta\right]Ni_0e^{\delta t}.
\eel
\end{linenomath*}
Plugging \eqref{eq:9} and subsequently \eqref{eq:181} into \eqref{eq:171} then gives
\begin{linenomath*}
\footnotesize
\begin{align*}
	\mathcal{E}_2(
	\omega) &\approx  1-\Phi\left(\Phi^{-1}(\alpha)+pNi_0\sqrt{\sum_{t=1}^T\frac{e^{2\delta t}}{\sigma_t^2}\left[\beta_\epsilon\bigg(\frac{e^{-\delta_\epsilon}-1}{-\delta_\epsilon}\bigg)e^{\epsilon(\cos\omega-\sin\omega)t}-\beta\bigg(\frac{e^{-\delta}-1}{-\delta}\bigg)\right]^2}\phantom{-}\right) \\
	    &\approx 1-\Phi\left(\Phi^{-1}(\alpha)+pNi_0\sqrt{\sum_{t=1}^T\frac{e^{2\delta t}}{\sigma_t^2}\left[(\beta+\epsilon\cos\omega) e^{\epsilon(\cos\omega-\sin\omega)t}-\beta\right]^2}\phantom{-}\right),
\end{align*}
\end{linenomath*}
\normalsize
as claimed. Finally, from \edit{Approximation} \ref{prop1} we know the directions of least separation between the trajectories corresponding to $\theta_0$ and $\theta_\epsilon(\omega)$ are approximately $\omega=\pi/4$ and $5\pi/4$. This suggests it will be most difficult to distinguish the null and alternative hypotheses of \eqref{eq:hypothesisSIR} when $\omega=\pi/4$ or $5\pi/4$. So when constrained to a significance level $\alpha$, one of these two angles will approximately maximize the type II error rate $\mathcal{E}_2(\omega)$ over all $\omega\in[0,2\pi)$. 

\end{appendices}

\end{document}